\newtheorem{theorem}{Theorem}
\newtheorem{corollary}[theorem]{Corollary}
\newtheorem{proposition}[theorem]{Proposition}
\DeclareMathOperator{\sur}{\text{sur}}
\begin{document}

\title{Deciding factor for detecting a particle within a subspace via  dark and bright  states}

\author{Aashay Pandharpatte$^{1,2}$, Pritam Halder$^2$, Aditi Sen(De)$^2$}
\affiliation{$^1$1
Indian Institute of Science Education and Research, Pune 411 008, India}
\affiliation{$^2$Harish-Chandra Research Institute, A CI of Homi Bhabha National Institute,  Chhatnag Road, Jhunsi, Allahabad - 211019, India}

\begin{abstract}
   In a measurement-induced continuous-time quantum walk, we address the problem of detecting a particle in a subspace, instead of a fixed position. In this configuration, we develop an approach of bright and dark states based on the unit and vanishing detection probability respectively for a particle-detection in the subspace.  Specifically, by employing the rank-nullity theorem, we determine several properties of dark and bright states in terms of  energy spectrum of the Hamiltonian used for a quantum walk and the projectors applied to detect the subspace. We provide certain conditions on the position and the rank of the subspace to be detected, resulting in the unit total detection probability, which has broad implications for quantum computing.  Further, we illustrate the forms of dark as well as bright states and  the dependence of detection probability on the number of dark states by considering a cyclic graph with nearest-neighbor and next nearest-neighbor hopping. Moreover, we observe that the divergence in the average number of  measurements for detecting a particle successfully in a subspace can be reduced by performing high rank projectors.

\end{abstract}
\maketitle

\section{Introduction}

The quantum mechanical analogue of a classical random walk, referred to as quantum walk \cite{kempe2003rev,andraca2012,Kadian2021}, can be classified into two distinct categories -- discrete-time and continuous-time quantum walk. Due to the quantum superposition principle, quantum walk represents a sophisticated framework for constructing quantum algorithms which, in turn, results in a universal paradigm for quantum computation \cite{Karafyllidis2005,childs2009,lovett2010,Kendon2014,Singh2021}. In particular, it has been utilized in a wide range of quantum information processing tasks, including   quantum search \cite{shenvi2003,Wong2015,Li2020}, quantum encryption and security \cite{Rohde2012,el-latif2020}, cryptographic systems \cite{Abd-El-Atty2021}, random number generation \cite{bae2021,bae2022}, state engineering \cite{vieira2013, innocenti2017, giordani2019,Kadian2021} to name a few. Additionally, quantum walks have been experimentally implemented \cite{Manouchehri2014} using   nuclear magnetic resonance \cite{du2003,ryan2005}, photonic  \cite{perets2008,bian2017} and optomechanical systems \cite{Moqadam2015}, and trapped ions \cite{Karski2009}.

One of the primary objectives of continuous-time quantum walks is to determine the probability and time of arriving at a certain location when a particle starts from a particular initial position. Despite controversies surrounding the consideration of time as an operator \cite{mielnik2011time}, significant progress have been achieved when addressing the time-of-arrival problem in the literature \cite{Allcock1969,Kijowski1974,Halliwell2009,Anastopoulos2006,Aharonov1998,Kumar1985,Galapon2012,Galapon2004,Galapon2005,Chakraborty2016}. Concurrently, several quantum search setups \cite{Grover1997,Childs2004,Magniez2011,Novo2015,Li2017} have been proposed, in conjunction with investigations into state transfer phenomena \cite{tanner2009, bohm2015}. 
In addition to the approaches, a periodic measurement strategy combined with unitary evolution -- which is determined by the Hamiltonian of a certain system -- can be employed to identify the particle.
 Within this realm,  the measurement process dynamically influences the evolution of the state of the walker.

 In stroboscopic measurement-induced quantum walk (MIQW) \cite{Didi2022}, the first-detected arrival problem becomes relevant \cite{Krovi2006hitting,krovi2006hypercube,Krovi2007quotient} over the first arrival time, which excludes stroboscopic measurements. Specifically, determining the walker in the target state using periodic measurements for the first time after beginning from some initial state is known as {\it the first detection problem}.  While measurements impede the quantum walker's free evolution, this problem has attracted a lot of  attention   \cite{Sinkovicz2016,Sinkovicz2015,Dhar2015,Grunbaum-schur-func2014,Caruso2009} since it is related to readout techniques in quantum computing tasks and   control of quantum systems \cite{butkovskiy1990control,Huang1983,Pierce1988,rabitz2000,Wiseman_Milburn_2009,campo2015}. Moreover, MIQW is intricately linked to mid-circuit measurements \cite{Govia2023,norcia2023,norcia2023yb}, a key component in quantum computing,  error correction \cite{hashim2024quasiprobabilistic} and mitigation \cite{botelho2022} and  has also been implemented on IBM quantum computers equipped with a mid-circuit readout feature \cite{wang2024hitting}. From a more fundamental perspective, this method of detecting a particle in a fixed position may further highlight the role of measurements in quantum theory 
\cite{Li2019,muller2022,feng2023,poboiko2024}.

The total probability of the first detection, referred to as  the \textit{total detection probability} \cite{Yamasaki2003},  is the statistics of the walker's detection for the first time during the application of an infinite number of stroboscopic measurements in MIQW \cite{Krovi2006hitting,krovi2006hypercube,Krovi2007quotient,FriedmanKessler2017,Allcock1969,Galapon2004,Galapon2005,Kumar1985,Aharonov1998,Anastopoulos2006,Allcock1969}. 
To emphasize, there exist certain initial conditions under which the desired state is never achieved due to destructive interference in the system; we refer to these states as dark states \cite{plenio1998,Barkai2020DarkState} in accordance with forbidden transitions in atomic physics and dark modes in quantum networks \cite{Chunhua2012,Kuzyk2017,Huang2022,huang2023darkmodetheoremsquantumnetworks}, whereas bright states can be recognized with certainty. It has been shown that the probability of identifying these states is connected with the existence of dark and bright energy states in the system \cite{Barkai2020DarkState,Liu2022}.   Recently, the method of calculating the number of dark modes in the context of opto-mechanical quantum networks has also  been proposed  \cite{huang2023darkmodetheoremsquantumnetworks}. However, in the framework of MIQW,  no methods for  computing the dark states in the case of subspace detection exist in the literature which is one of the main goals of this work.

\begin{figure}[h]
        \centering
        \includegraphics[width=\linewidth]{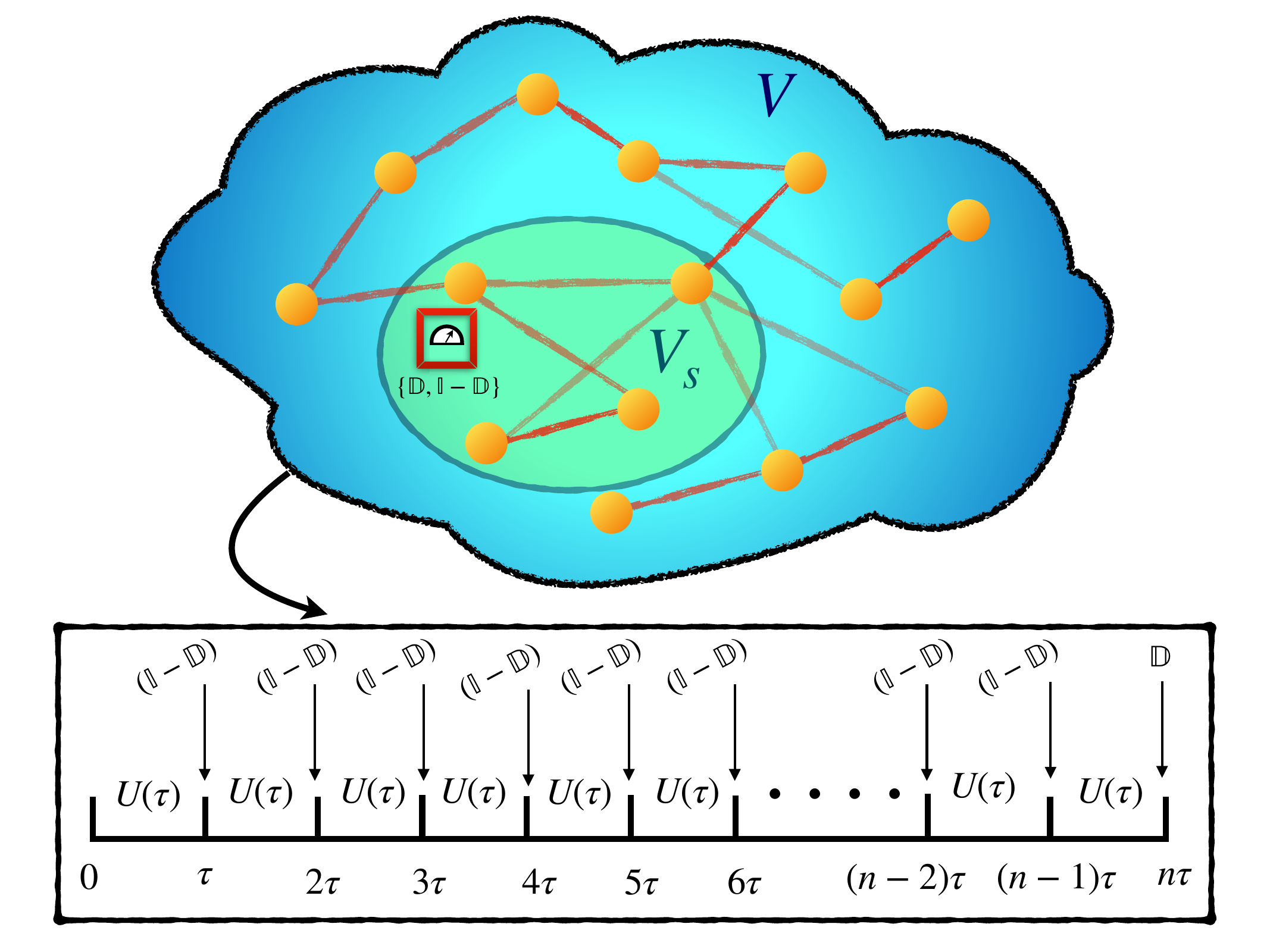}
        \caption{Schematic diagram of first detection of the walker in subspace $V_s$ of a finite graph, consisting of a set of vertices $V$ with measurement-induced quantum walk. In the MIQW protocol, the first hitting time statistics $F_n$ is captured through the application of unitary dynamics, $U(\tau)$, punctuated intermittently by measurements $\{\mathbb D, \mathbb{I-D}\}$ taken in a stroboscopic fashion with time interval $\tau$. Here, the $\mathbb D$ is the detector corresponding to the subspace $V_s$. The schematic shows that upto $(n-1)$th round of measurements, no-click event is occurring, i.e., the subspace $V_s$ is not detected, while  after the $n$th round of measurement, the particle is detected in subspace $V_s$ which corresponds to a successful click event. After successful detection the protocol is stopped.}
        \label{stroboscopic-diagram}
    \end{figure}

We employ the concepts of dark and bright energy levels to address the problem of detecting a particle in a subspace (see Fig. \ref{stroboscopic-diagram} for schematics), going beyond locating it in a single site. It has also been addressed by using different methods, namely non-Hermitian approach \cite{Dhar2015} and by establishing its connection with the properties of Schur function \cite{Grunbaum-schur-func2014}. It is crucial to highlight that in certain cases, the method proposed here can explain the total detection probability in a more simpler manner than the existing methods. 
In this work, we adopt the rank-nullity theorem \cite{golub2013matrix} to establish a connection between the existence of dark and bright states by selecting a subspace from the set of vertices of a discrete, and finite graph in which  the particle has to be identified.
In contrast to the scenario encountered in a localized single-site detection within a graph, we exhibit that for each degenerate energy level, there can exist more than one bright energy state, in the subspace detection within a system.  Subsequently, we derive explicit formulae for the orthonormal states within the dark energy subspace and its corresponding complementary bright energy subspace.  We demonstrate that the total detection probability decreases monotonically with an increase in the number of dark states in the system
which is related to the increase of rank and the position of the detector. Importantly, we provide a necessary and sufficient condition for detecting a particle in a subspace with certainty in an arbitrary finite graph having discrete, bounded and degenerate spectrum independent of vertex-localized initial states which can be important in quantum computation.  We observe that  increasing the rank of the subspace and strategically placing detectors can minimize the divergences in average hitting time observed in the case of a single-site detection.

The paper is organized in the following manner. In Sec. \ref{sec:subspace}, the problem of subspace detection and the quantities of interest are discussed. In the context of a particle to be detected in a subspace, the notion of dark as well as bright states   and the criteria for unit detection probability are  presented in Sec. \ref{sec:br_vs_dark_energy}.  Sec. \ref{sec:matrix_method} illustrates another method for computing the total detection probability  based on the computation of matrices numerically while both the methods are applied on interacting systems with nearest-neighbor and next nearest-neighbor hopping in Sec. \ref{sec:interacting}. In Sec. \ref{sec:avnomeasure}, we study the pattern of the average number of measurement in detecting a particle in subspace while the results are summarized in Sec. \ref{sec:conclu}. 



\section{Stroboscopic subspace  detection protocol}
\label{sec:subspace}

Let us consider a quantum mechanical particle moving on a finite graph having a set of vertices $V$$=\left\{\ket{l}\Big|\sum_{l=1}^L\ketbra{l}{l}=\mathbb I\text{ and }\braket{l}{m}=\delta_{l,m}\right\}_{l=1}^L$, described by a time-independent Hamiltonian $H=-\sum_{l,m=1}^L \gamma_{lm} \ketbra{l}{m}$ where $\gamma_{lm}$ are constants. Thus, unitary dynamics of the initial state, $\ket{\phi(0)}$, leads to the evolved state at time, $t$, as $\ket{\phi(t)}=U(t)\ket{\phi(0)}=e^{-iHt}\ket{\phi(0)}$. In the context of hitting problem \cite{Krovi2006hitting,Varbanov2008} in MIQW, we are interested to determine the position of particle in a given subspace of $V$. Towards achieving the same, we perform repeated projective measurements with periodicity $\tau$, corresponding to the  subspace $V_s=\{\ket{d_i}\}_{i=1}^{\tilde r<L}$,  written as 
\begin{eqnarray}
    \bigg\{\mathbb{D} = \sum_{i=1}^{\tilde r<L} \ketbra{d_i}, \mathbb{I}-\mathbb{D}\bigg\},
    \label{eq:detector_meas}
\end{eqnarray}
where $\ket{d_i}$ can be any vertex of $V$ with $\braket{d_i}{d_j}=\delta_{i,j}$ and $\tilde r$ represents the rank of the detector, $\mathbb D$. Under the assumption that the measurements are performed instantaneously, we consider a sequence of measurements until the particle is detected. Therefore, if the particle remains undetected up to $(n-1)$ number of measurement rounds, the unnormalized resulting state just before the successful detection at round $n$ can be written as 
\begin{eqnarray}
    \ket{\phi(n\tau)}\equiv\ket{\phi(n)}=U(\tau) [(\mathbb{I}-\mathbb{D})U(\tau)]^{n-1}\ket{\phi(0)}.
\end{eqnarray}
The first detection probability, i.e., the probability in detecting the particle for the first time after $n$th measurement attempt is given by \cite{Dhar2015, FriedmanKessler2017}
\begin{equation}
    F_{n} =  \bra{\phi(n)}\mathbb{D}\ket{\phi(n)},
    \label{Fn for subspace}
\end{equation}
while the \textit{total first detection probability}, $P_{\det}$ of the particle is defined as the detection probability after an infinite number of measurements conditioned on the fact that once the particle is detected, measurement process is stopped \cite{Barkai2020DarkState,FriedmanKessler2017}. Alternatively, we call it as \textit{total detection probability}, and mathematically, we can write it as 
\begin{eqnarray}
    P_{\det}=\sum_{n=1}^{\infty} F_n.
    \label{eq:pdet}
\end{eqnarray}
Also, the probability of the particle surviving the first $n$ rounds of measurement can be written as
\begin{eqnarray}
    \nonumber S_n &=& 1-\sum_{k=1}^n F_k\\\nonumber &=& \bra{\phi(n)} (\mathbb{I-D}) \ket{\phi(n)} \\
    \nonumber &=& \bra{\phi(0)}[U^{\dagger}(\tau)(\mathbb{I-D})]^n[(\mathbb{I-D})U(\tau)]^n \ket{\phi(0)} \\ &=& \bra{\phi(0)} \mathbb{S}^{{\dagger}^n} \mathbb{S}^n \ket{\phi(0)},
\end{eqnarray}
where $\mathbb{S}\equiv(\mathbb{I}-\mathbb{D})U(\tau)$ is the survival operator. Therefore, the final survival probability (i.e., in  $\lim_{n\to\infty}S_n$) reads as 
\begin{eqnarray}
    P_{\sur} = \lim_{n\to\infty} S_n = 1-P_{\det}.
\end{eqnarray}
In the case of identifying a particle in a fixed subspace, we will be focusing on developing a framework that can be utilized to obtain total detection probability.

\section{Prescription for calculation of total detection probability through dark and bright energy subspace }
\label{sec:br_vs_dark_energy}

We now develop a method that leads to a definite conclusion about whether a particle resides in a given region. In particular, we investigate the trends of the total detection probability, $P_{\det}$, by varying the subspace in which the particle is to be detected. To address this question, we provide a framework aimed at partitioning the energy space of the Hamiltonian into two distinct orthogonal subspaces, namely dark and bright subspaces \cite{Barkai2020DarkState}.

{\it Dark and bright states.} Given an initial state $\ket{\phi(0)}$, $P_{\det}(\phi(0))=0$, i.e., $F_n=0\forall n$ represents a \textit{dark state} with respect to a detection space $\mathbb{D}$ on the other hand, $P_{\det}(\phi(0))=1$ corresponds to \textit{bright state} which is detected with certainty. However, there can be initial states that are neither completely dark nor bright, and the first detection probability lies between $0$ and $1$, i.e., $0 < P_{\det} < 1$. 

We are interested in the stationary dark states, which are the eigenstates of both the unitary evolution $U(\tau)$ and survival operator $\mathbb{S}$. 
Let us denote the $k$-th energy level of the Hamiltonian, $H$ as $E_k$ and the corresponding set of eigenvectors as $\{\ket{E_{k,m}}\}_{m=1}^{g_k}$ where $g_k$ is the degeneracy of $E_k$. If an energy level is non-degenerate, we omit the index $m$.
We consider two different scenarios in case of degeneracy of $E_k$ while finding conditions for dark state to exist.  

{\it (i) Non-degenerate energy levels.} According to the definition, a non-degenerate energy level $E_k$ is a dark state if 
\begin{eqnarray}
    \mathbb{D}\ket{E_k} = 0,
    \label{eq:dark_subspace}
\end{eqnarray}
and
\begin{eqnarray}
    (\mathbb{I} - \mathbb{D})U(\tau)\ket{E_k} = \exp(-i E_k \tau)\ket{E_k}.
\end{eqnarray}
The condition in Eq.~(\ref{eq:dark_subspace}) for a non-degenerate energy eigenstate to be a dark state can be equivalently expressed as $\braket{d_i}{E_k}=0$ $\forall i$. 
In the other scenario, i.e., for degenerate energy levels, the physics of dark and bright states with respect to subspace detection is much more captivating as discussed below.

{\it (ii) Degenerate energy levels.} In case of degenerate eigenstates, $\{\ket{E_{k,m}}\}_{m=1}^{g_k}$, we construct an projector $\mathbb{E}_k$ which can be written mathematically as
\begin{eqnarray}
    \mathbb{E}_k = \sum_{m=1}^{g_k} \ketbra{E_{k,m}}.
    \label{Energy-projector}
\end{eqnarray}
Here, the first index $k$ in $\ket{E_{k,m}}$ corresponds to the distinct energy level, and the second index $m$ indicates the level of degeneracy present in each level. To find out the existence of dark states in the corresponding degenerate subspace, let us write any dark state in that particular subspace as 
\begin{eqnarray}
     \ket{\zeta_k} = \sum_{m=1}^{g_k} \alpha_{m}\ket{E_{k,m}}.
     \label{degen-dark-state}
\end{eqnarray}
Therefore, by Eq. \eqref{eq:dark_subspace} we get 
\begin{eqnarray}
    \mathbb{D}\ket{\zeta_k} &=& 0
    \label{eq:d_zetak}\\\nonumber
    \implies \sum_{i=1,m=1}^{\tilde r,g_{k}} \alpha_{m} \ket{d_{i}}\braket{d_{i}}{E_{k,m}} &=& 0,\\ \nonumber
    \implies \sum_{m=1}^{g_{k}} \alpha_{m} \ket{d_{i}}\braket{d_{i}}{E_{k,m}} &=& 0 \forall i\\
    \implies \mathcal{A}_k \ket{\tilde\alpha} &=& 0,
    \label{eq:condition_dark}
\end{eqnarray}
with the $\tilde{r}\times g_k$ matrix, $\mathcal{A}_k$, being
\begin{eqnarray}
        \mathcal{A}_k = \begin{bmatrix}
            \braket{d_1}{E_{k,1}} & \braket{d_1}{E_{k,2}} & \hdots &\braket{d_1}{E_{k,g_{k}}} \\
            \braket{d_2}{E_{k,1}} & \braket{d_2}{E_{k,2}} & \hdots &\braket{d_2}{E_{k,g_{k}}} \\
            \vdots & \vdots & \ddots & \vdots \\
            \braket{d_r}{E_{k,1}} & \braket{d_r}{E_{k,2}} & \hdots &\braket{d_r}{E_{k,g_{k}}} \\
        \end{bmatrix},
         \label{A-matrix}
\end{eqnarray}
and $\ket{\tilde\alpha} = (\alpha_1,\alpha_2,\ldots,\alpha_{g_k})^{T}$ being the coefficient vector of the dark state (see Eq.~\eqref{degen-dark-state}).
The above condition clearly shows that the existence of $\ket{\zeta_k}$ depends on the overlaps of $\ket{E_{k,m}}$ with $\ket{d_i}$, i.e, $\braket{d_{i}}{E_{k,m}}$ $\forall i,m$. Let us proceed to analyze this matter through a systematic examination of individual cases.

{\it Case I. }Consider the scenario when $\braket{d_{i}}{E_{k,m}}=0 \text{  }\forall \text{  } i,m $. In this case, all the degenerate energy eigenstates  $\{\ket{E_{k,m}}\}_{m=1}^{g_k}$ corresponding to energy $E_k$ are the dark states, i.e., $\ket{\zeta_k^j}=\ket{E_{k,j}}$ for $j=1,2,\ldots, g_k$. Therefore, the entire energy subspace is dark.

{\it Case II. }Let us consider a situation when $\braket{d_i}{E_{k,m}}=0$ for some of $i$ and $m$ but not all of them. We present one of our main findings as Theorem \ref{prop:dark_number}, from which the number of dark states in a system can be calculated. Note that in the complementary subspace to the dark subspace, the energy states are eventually bright states which will be proved later in Proposition \ref{prop:bright_pdet1}.
\begin{theorem}
    Number of dark states in the subspace $\{\ket{E_{k,m}}\}_{m=1}^{g_k}$ is equal to the dimension of the null space of $\mathcal{A}_k$, denoted as $\dim({\mathcal{N_A}_k})$ while the number of bright states is equal to the rank of matrix $\mathcal{A}_k$, $\rank(\mathcal{A}_k)$.
    \label{prop:dark_number}
\end{theorem}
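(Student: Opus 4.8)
The plan is to reduce the problem of counting dark states inside the eigenspace $\{\ket{E_{k,m}}\}_{m=1}^{g_k}$ to a purely linear-algebraic statement about the kernel of the matrix $\mathcal{A}_k$ of Eq.~\eqref{A-matrix}, and then read off both the dark count and the bright count from the rank-nullity theorem. The nontrivial algebraic identity $\mathbb{D}\ket{\zeta_k}=0\Leftrightarrow\mathcal{A}_k\ket{\tilde\alpha}=0$ has in fact already been established in Eqs.~\eqref{eq:d_zetak}--\eqref{eq:condition_dark}, so most of the work is bookkeeping about dimensions.

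First I would pin down what ``a dark state in $\{\ket{E_{k,m}}\}_{m=1}^{g_k}$'' means operationally. Restricting to the stationary dark states declared above --- simultaneous eigenstates of $U(\tau)$ and $\mathbb{S}$, which (up to the usual genericity assumption on $\tau$ that keeps $U(\tau)$ from merging two distinct energy levels) lie inside a single energy eigenspace --- any $\ket{\zeta_k}=\sum_{m=1}^{g_k}\alpha_m\ket{E_{k,m}}$ is automatically an eigenstate of $U(\tau)$ with eigenvalue $e^{-iE_k\tau}$, and it is a dark state precisely when $\mathbb{D}\ket{\zeta_k}=0$: in that case $(\mathbb{I}-\mathbb{D})U(\tau)\ket{\zeta_k}=e^{-iE_k\tau}\ket{\zeta_k}$ and $F_n=\bra{\phi(n)}\mathbb{D}\ket{\phi(n)}=0$ for all $n$, while the converse is immediate from the definition. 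Hence the dark states inside this eigenspace are exactly the vectors $\ket{\zeta_k}$ whose coefficient vectors $\ket{\tilde\alpha}$ solve the homogeneous linear system $\mathcal{A}_k\ket{\tilde\alpha}=0$, i.e.\ lie in the null space $\mathcal{N}_{\mathcal{A}_k}$.

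Next I would invoke orthonormality of $\{\ket{E_{k,m}}\}_{m=1}^{g_k}$: the map $\Phi:\mathbb{C}^{g_k}\to\mathrm{span}\{\ket{E_{k,m}}\}$, $\ket{\tilde\alpha}\mapsto\sum_m\alpha_m\ket{E_{k,m}}$, is a unitary isomorphism, and by the previous paragraph it carries $\mathcal{N}_{\mathcal{A}_k}$ bijectively onto the set of dark states in the $E_k$-subspace. In particular those dark states form a linear subspace, so ``number of dark states'' is unambiguously its dimension, and an orthonormal dark basis can be produced by Gram-Schmidt; therefore the number of dark states equals $\dim\mathcal{N}_{\mathcal{A}_k}$, proving the first claim (and, as a consistency check, reproducing Case I where $\mathcal{A}_k=0$ so that nullity $=g_k$ and every degenerate eigenstate is dark).

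Finally, the orthogonal complement of the dark subspace within the $g_k$-dimensional eigenspace has dimension $g_k-\dim\mathcal{N}_{\mathcal{A}_k}$, which by the rank-nullity theorem for $\mathcal{A}_k:\mathbb{C}^{g_k}\to\mathbb{C}^{\tilde r}$ equals $\rank(\mathcal{A}_k)$; by Proposition~\ref{prop:bright_pdet1} the energy eigenstates spanning this complement are bright, so the number of bright states in $\{\ket{E_{k,m}}\}_{m=1}^{g_k}$ is $\rank(\mathcal{A}_k)$. The step I expect to require the most care is the first one: justifying that the stationary dark states genuinely decompose level by level and that within each level they form a subspace, so that counting linearly independent dark states and taking the dimension of $\mathcal{N}_{\mathcal{A}_k}$ really are the same thing; once this is secured, the remainder is the rank-nullity identity together with the forward reference to Proposition~\ref{prop:bright_pdet1}.
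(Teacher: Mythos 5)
Your proposal is correct and follows essentially the same route as the paper: reduce darkness within the $E_k$-eigenspace to the homogeneous system $\mathcal{A}_k\ket{\tilde\alpha}=0$ via Eqs.~\eqref{eq:d_zetak}--\eqref{eq:condition_dark}, identify the dark count with $\dim(\mathcal{N}_{\mathcal{A}_k})$, and obtain the bright count from rank-nullity together with the (forward-referenced) fact that the orthogonal complement is bright. You are somewhat more explicit than the paper in justifying that $\mathbb{D}\ket{\zeta_k}=0$ within a single eigenspace already guarantees stationary darkness and in identifying ``number of states'' with the dimension of a subspace via the coefficient isomorphism, but these are details the paper leaves implicit rather than a different argument.
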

\begin{proof}
It is evident from Eq.~(\ref{eq:condition_dark}) that the existence of dark states is equivalent to finding non-trivial solutions $(\text{trivial solution is }\alpha_i=0\text{ for }i=1,2,\ldots,g_k)$ of the matrix $\mathcal{A}_k$ which, in turn, is linked to the determination of its nullspace, denoted as $\mathcal{N}_{\mathcal{A}_k}$. Therefore, the number of dark states in the corresponding energy subspace is equal to dimension of the nullspace, i.e., $\dim{(\mathcal{N}_{\mathcal{A}_k})}$. Moreover, from rank-nullity theorem \cite{golub2013matrix}, we know 
\begin{eqnarray}
    \nonumber \rank(\mathcal{A}_k) + \dim(\mathcal{N_A}_k) = \text{Number of columns of $\mathcal{A}_k$} = g_k.\\
    \label{Rank-null}
\end{eqnarray}
Therefore, the number of bright states is just $\rank(\mathcal{A}_k)$ since any degenerate subspace is spanned by dark states and its complement space, containing only bright states \cite{Barkai2020DarkState}. Mathematically, we can write that $\{\ket{E_{k,m}}\}_{m=1}^{g_k}$ is spanned by $\left\{\left\{\ket{\zeta_k^j}\right\}_{j=1}^{\dim({\mathcal{N_A}_k})},\left\{\ket{\eta_k^j}\right\}_{j=1}^{\rank({\mathcal{A}_k})}\right\}$, i.e., following Eq.~\eqref{Energy-projector}, it can be expressed as
\begin{eqnarray}
    \mathbb{E}_k = \sum_{j=1}^{\dim(\mathcal{N_A}_k)}\ketbra{\zeta_k^j}{\zeta_k^j} + \sum_{j=1}^{\rank(\mathcal{N_A}_k)}\ketbra{\eta_k^j}{\eta_k^j}.
    \label{eq:e_k_proj}
\end{eqnarray}
\end{proof}

Let us now explicitly calculate the basis states consisting of bright and dark states for degenerate energy levels. The general form of the matrix $\mathcal{A}_k$ after performing row reduction on it and removing zero rows can be updated as 
\begin{eqnarray}
    \mathcal{A}_k = \begin{bmatrix}
            a_{1,1} & a_{1,2} & \hdots & a_{1,l_k} & \hdots &a_{1,g_k} \\
            0 & a_{2,2} & \hdots & a_{2,l_k} &\hdots &a_{2,g_k} \\
            \vdots & \vdots & \ddots & \vdots & \vdots \\
            0 & 0 & \hdots& a_{l_k,l_k} & \hdots &a_{l_k,g_k}
        \end{bmatrix}_{l_k\times g_k},
         \label{eq:row_reduced_A-matrix}
\end{eqnarray}
of reduced dimension where $l_k=\rank(\mathcal{A}_k)\leq g_k$ with $a_{i,j}=0$ $\forall i>j$. Also from rank-nullity theorem, we know that the dark subspace corresponding to a degenerate energy $E_k$ exists if $l_k<g_k$. Now we can write one of the dark states as
\begin{eqnarray}
    \ket{\zeta_{k}^1} = N_1\begin{vmatrix}
        \ket{E_{k,1}} & \ket{E_{k,2}} & \hdots & \ket{E_{k,l_k}}&\ket{E_{k,l_k+1}} \\
        a_{1,1} & a_{1,2} & \hdots & a_{1,l_k} & a_{1,l_k+1} \\
            0 & a_{2,2} & \hdots & a_{2,l_k}  &a_{2,l_k+1} \\
            \vdots & \vdots & \ddots & \vdots & \vdots \\
            0 & 0 & \hdots& a_{l_k,l_k}  &a_{l_k,l_k+1}
  \end{vmatrix},
\end{eqnarray}
while all other dark states can be iteratively written as
\begin{widetext}
\begin{eqnarray}
    \nonumber \ket{\zeta_{k}^j} = 
    N_j\begin{vmatrix}
        \ket{E_{k,1}} & \ket{E_{k,2}} & \hdots & \ket{E_{k,l_k}}&\ket{E_{k,l_k+1}} & \ket{E_{k,l_k+2}}  & ... & \ket{E_{k,l_k+j}} \\
        a_{1,1} & a_{1,2} & \hdots & a_{1,l_k} & a_{1,l_k+1} &a_{1,l_k+2} & ... & a_{1,l_k+j} \\ 
        0 & a_{2,1}  &  ... & a_{2,l_k} & a_{2,l_k+1} &a_{2,l_k+2} & ... & a_{2,l_k+j} \\
        \vdots & \vdots & \ddots &\vdots &\vdots & \vdots & \ddots &\vdots\\
        0 & 0 & \hdots& a_{l_k,l_k}  &a_{l_k,l_k+1} & a_{l_k,l_k+2} & ... & a_{l_k,l_k+j} \\
        \braket{\zeta^{1}_k}{E_{k,1}} & \braket{\zeta^{1}_k}{E_{k,2}} & ... & \braket{\zeta^{1}_k}{E_{k,l_k}} &\braket{\zeta^{1}_k}{E_{k,l_k+1}} & \braket{\zeta^{1}_k}{E_{k,l_k+2}} & ... & \braket{\zeta^{1}_k}{E_{k,l_k+j}} \\
        \vdots & \vdots &\ddots &\vdots & \vdots &\vdots & \ddots & \vdots\\ 
        \braket{\zeta^{j-1}_k}{E_{k,1}} & \braket{\zeta^{j-1}_k}{E_{k,2}} & \hdots & \braket{\zeta^{j-1}_k}{E_{k,l_k}} & \braket{\zeta^{j-1}_k}{E_{k,l_k+1}} & \braket{\zeta^{j-1}_k}{E_{k,l_k+2}} & \hdots & \braket{\zeta^{j-1}_k}{E_{k,l_k+j}}
  \end{vmatrix}_{(l_k+j)\times (l_k+j)}.\\
\end{eqnarray}
\end{widetext}
with $N_j$ being the normalization constant. 

On the other hand, the projector of $E_k$ sector acting on the individual states of the detector $\left\{\ket{d_i}\right\}_{i=1}^{\tilde{r}}$ can be written as $\left\{ \frac{\mathbb{E}_k\ket{d_i}}{\sqrt{\bra{d_i}\mathbb{E}_k \ket{d_i}}} \right\}_{i=1}^{\tilde{r}}$, which, by Gram-Schmidt orthogonalization procedure, can be transformed to mutually orthogonal set as
\begin{eqnarray}
    \left\{\ket{\eta_k^j}=\sum_{i=1}^{\tilde r}c_{k,i}^j\mathbb{E}_k\ket{d_i}\Bigg|\braket{\eta_k^j}{\eta_k^{j'}}=\delta_{j,j'}\right\}_{j=1}^{l_k}.
    \label{eq:bright_mus}
\end{eqnarray}
Note that this is equivalent in orthonormalizing the set $\left\{\sum_{m\geq p}^{g_k}a_{p,m}\ket{E_{k,m}}\right\}_{p=1}^{l_k}$ as evident from Eqs. \eqref{A-matrix} and \eqref{eq:row_reduced_A-matrix}. Eventually, the set $\left\{\ket{\eta_k^j}\right\}_{j=1}^{l_k}$ is actually the bright states corresponding to $E_k$ sector which will be proved shortly.

\subsection{Detection probability from bright or dark space projection}
We possess the requisite foundation to calculate the total detection probability by exploiting the idea of dark and bright states discussed above. For an initial state$\ket{\phi(0)}$, we can write 
\begin{eqnarray}
    \ket{\phi(0)} = \mathbb{P}_{\mathcal{H}_{\zeta}}\ket{\phi(0)} + \mathbb{P}_{\mathcal{H}_{\eta}}\ket{\phi(0)},
    \label{eq:phiin}
\end{eqnarray}
where $\mathbb{P}_{\mathcal{H}_{\zeta}} = \mathlarger\sum_{k,j}\ketbra{\zeta_k^j}{\zeta_k^j}$ and $\mathbb{P}_{\mathcal{H}_{\eta}} = \mathlarger\sum_{k,j}\ketbra{\eta_k^j}{\eta_k^j}$ are the projectors of dark and bright subspaces respectively with $\mathbb{P}_{\mathcal{H}_{\zeta}}+\mathbb{P}_{\mathcal{H}_{\eta}}=\mathbb{I}$. The survival probability \cite{krovi2006hypercube} can be found by considering the overlap of the initial state with the dark subspace, i.e., $P_{\sur} =\bra{\phi(0)}\mathbb{P}_{\mathcal{H}_{\zeta}}\ket{\phi(0)}$. Consequently, from Eq. \eqref{eq:phiin}, it follows that the total detection probability can be written as \cite{Barkai2020DarkState}
\begin{eqnarray}
    P_{\det} &=& \bra{\phi(0)}\mathbb{P}_{\mathcal{H}_{\eta}}\ket{\phi(0)} \label{eq:pdet_bright1}\\ &=& \mathlarger{\sum}_{k}\mathlarger{\sum}_{j=1}^{l_k}\mathlarger{|}\bra{\eta_k^j}\ket{\phi(0)}\mathlarger{|}^2,
    \label{eq:pdet_bright}
\end{eqnarray}
where the index $k$ runs over all the distinct energy levels of $H$, responsible for the evolution of the system, and $l_k=\rank{(\mathcal{A}_k)}$. Equivalently, one can also calculate 
\begin{eqnarray}
    P_{\det} = 1 - \mathlarger{\sum}_{k}\mathlarger{\sum}_{j=1}^{\dim(\mathcal{N_A}_k)}\mathlarger{|}\bra{\zeta_k^j}\ket{\phi(0)}\mathlarger{|}^2.
    \label{eq:pdet_dark}
\end{eqnarray}
Having formulated the detection probability $P_{\det}$, let us now show that any $\ket{\eta_k^j}$ represents the bright state as mentioned earlier. 
\begin{proposition}
    Any state from the set $\left\{\ket{\eta_k^j}\right\}_{j=1}^{\rank(\mathcal{A}_k)}$, resides in the complementary space of the dark subspace, having unit total first detection probability, i.e., $P_{\det}=1$.
    \label{prop:bright_pdet1}
\end{proposition}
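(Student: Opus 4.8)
The plan is to show that each $\ket{\eta_k^j}$ lies outside the dark subspace $\mathcal{H}_\zeta$ and then invoke the decomposition of $P_{\det}$ already established in Eq.~\eqref{eq:pdet_bright}. First I would verify that $\ket{\eta_k^j}$ is an eigenstate of the survival operator $\mathbb{S}=(\mathbb{I}-\mathbb{D})U(\tau)$ only up to the $\mathbb{D}$-component: since $\ket{\eta_k^j}\in\mathbb{E}_k$ (the $E_k$ eigensector of $H$), we have $U(\tau)\ket{\eta_k^j}=e^{-iE_k\tau}\ket{\eta_k^j}$, so the dynamics does not mix it with other energy sectors, and hence the detection problem restricted to the initial state $\ket{\eta_k^j}$ stays entirely within the finite-dimensional sector $\mathbb{E}_k$. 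The key structural fact is that, by Theorem~\ref{prop:dark_number} and Eq.~\eqref{eq:e_k_proj}, the sector $\mathbb{E}_k$ splits orthogonally as $\mathcal{H}_\zeta^{(k)}\oplus\mathcal{H}_\eta^{(k)}$, where $\mathcal{H}_\zeta^{(k)}=\mathrm{span}\{\ket{\zeta_k^j}\}$ is annihilated by $\mathbb{D}$ and $\mathcal{H}_\eta^{(k)}=\mathrm{span}\{\ket{\eta_k^j}\}$ is, by construction in Eq.~\eqref{eq:bright_mus}, exactly the span of $\{\mathbb{E}_k\ket{d_i}\}_{i=1}^{\tilde r}$.

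Next I would argue that the restriction of the survival operator to $\mathcal{H}_\eta^{(k)}$ is a strict contraction, so that $\mathbb{S}^n\ket{\eta_k^j}\to 0$, which by $S_n=\|\mathbb{S}^n\ket{\phi(0)}\|^2$ gives $P_{\sur}=0$, i.e.\ $P_{\det}=1$. Concretely: suppose for contradiction that some unit vector $\ket{\chi}\in\mathcal{H}_\eta^{(k)}$ satisfies $\|\mathbb{S}\ket{\chi}\|=1$; since $U(\tau)$ is unitary and $(\mathbb{I}-\mathbb{D})$ is a projector, $\|(\mathbb{I}-\mathbb{D})U(\tau)\ket{\chi}\|=1$ forces $\mathbb{D}U(\tau)\ket{\chi}=0$, and because $\ket{\chi}$ is an $E_k$-eigenstate this reads $\mathbb{D}\ket{\chi}=0$, i.e.\ $\ket{\chi}\in\mathcal{H}_\zeta^{(k)}$. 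But $\mathcal{H}_\zeta^{(k)}\perp\mathcal{H}_\eta^{(k)}$, so $\ket{\chi}=0$, a contradiction. Hence $\|\mathbb{S}\restriction_{\mathcal{H}_\eta^{(k)}}\|<1$; iterating (the sector is finite-dimensional and $\mathbb{S}$-invariant modulo the part killed by $\mathbb{D}$, so one should track the norm decay carefully) yields $S_n\to 0$. I would then conclude $P_{\det}(\eta_k^j)=1$; alternatively, and more cleanly, I can just plug $\ket{\phi(0)}=\ket{\eta_k^j}$ into Eq.~\eqref{eq:pdet_bright}: only the $E_k$-sector contributes, and within it $\ket{\eta_k^j}$ is orthogonal to every $\ket{\zeta_k^{j'}}$ and lies in the span of the orthonormal $\{\ket{\eta_k^{j'}}\}$, so $\sum_{j'}|\braket{\eta_k^{j'}}{\eta_k^j}|^2=1$.

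The main obstacle is the logical ordering: Eq.~\eqref{eq:pdet_bright} was derived assuming the $\ket{\eta_k^j}$ are the bright states, which is precisely what this proposition asserts, so using \eqref{eq:pdet_bright} directly would be circular. I would therefore make the contraction argument the backbone — establishing independently that an initial state in $\mathrm{span}\{\mathbb{E}_k\ket{d_i}\}$ with no overlap on $\mathcal{H}_\zeta$ has $P_{\sur}=0$ — and only then note consistency with \eqref{eq:pdet_bright}. A secondary subtlety is ensuring the strict-contraction estimate genuinely forces $S_n\to 0$ rather than merely $S_n$ decreasing: since each $\mathbb{E}_k$ is finite-dimensional and $\mathbb{S}$ maps $\mathcal{H}_\eta^{(k)}$ into $\mathbb{E}_k$ with the $\mathbb{D}\mathbb{E}_k$-part strictly removing norm, one shows the operator norm of $\mathbb{S}$ restricted to the relevant invariant cone is bounded by some $q<1$, giving geometric decay $S_n\le q^{2n}$; this is the one place where a little care with the non-normality of $\mathbb{S}$ is needed, and I would handle it by passing to $\mathbb{S}^\dagger\mathbb{S}$ on the finite-dimensional sector and checking its largest eigenvalue on $\mathcal{H}_\eta^{(k)}$ is strictly less than one.
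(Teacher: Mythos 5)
Your instinct about circularity is sound, but the paper does not actually run the contraction argument you propose: it simply computes $\mathbb{P}_{\mathcal{H}_{\zeta}}\ket{\eta_k^j}=0$ directly (orthogonality of distinct energy sectors reduces the sum to $k'=k$, and then $\braket{\zeta_k^{j'}}{d_i}=0$ kills each term once $\ket{\eta_k^j}$ is written as $\sum_i c_{k,i}^j\mathbb{E}_k\ket{d_i}$), and then invokes the cited identity $P_{\sur}=\bra{\phi(0)}\mathbb{P}_{\mathcal{H}_{\zeta}}\ket{\phi(0)}$ to conclude $P_{\det}=1$. So the heavy lifting — that the orthogonal complement of the stationary dark subspace is entirely bright — is outsourced to \cite{krovi2006hypercube,Barkai2020DarkState}. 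Your plan to prove that fact from scratch is more ambitious, but as written it has a genuine gap.

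The gap is in the contraction backbone. First, the claim that the problem ``stays entirely within the sector $\mathbb{E}_k$'' is false: $\mathbb{S}\ket{\eta_k^j}=(\mathbb{I}-\mathbb{D})e^{-iE_k\tau}\ket{\eta_k^j}$, and $\mathbb{I}-\mathbb{D}$ is a position-space projector that scatters the state across all energy sectors, so $\mathbb{E}_k$ is not $\mathbb{S}$-invariant and your one-step estimate on $\mathcal{H}_\eta^{(k)}$ cannot be iterated. Second, the natural fix — working on the full bright subspace $\mathcal{H}_\eta=\bigoplus_k\mathcal{H}_\eta^{(k)}$, which \emph{is} $\mathbb{S}$-invariant because $\mathcal{H}_\zeta$ is invariant under both $\mathbb{S}$ and $\mathbb{S}^\dagger$ — does not rescue the operator-norm bound: your step ``$\|\mathbb{S}\ket{\chi}\|=1\Rightarrow\mathbb{D}U\ket{\chi}=0\Rightarrow\mathbb{D}\ket{\chi}=0$'' uses that $\ket{\chi}$ is an eigenstate of $U$, which fails for superpositions across sectors. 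Indeed one can generically tune $\ket{\chi}=a\ket{\eta_k^j}+b\ket{\eta_{k'}^{j'}}$ with $k\neq k'$ so that $\mathbb{D}U\ket{\chi}=0$ while $\mathbb{D}\ket{\chi}\neq0$; then $\|\mathbb{S}\ket{\chi}\|=1$, the largest eigenvalue of $\mathbb{S}^\dagger\mathbb{S}$ on $\mathcal{H}_\eta$ equals one, and your claimed bound $S_n\le q^{2n}$ is false. What is true, and what you need, is the \emph{spectral-radius} statement: if $\mathbb{S}\ket{\chi}=\lambda\ket{\chi}$ with $|\lambda|=1$, then $\|(\mathbb{I}-\mathbb{D})U\ket{\chi}\|=\|\ket{\chi}\|$ forces $\mathbb{D}U\ket{\chi}=0$, hence $U\ket{\chi}=\lambda\ket{\chi}$ and $\mathbb{D}\ket{\chi}=0$, so $\ket{\chi}$ is a stationary dark state and cannot lie in $\mathcal{H}_\eta$; therefore every eigenvalue of $\mathbb{S}\restriction_{\mathcal{H}_\eta}$ has modulus strictly below one, and on a finite-dimensional invariant subspace this gives $\mathbb{S}^n\restriction_{\mathcal{H}_\eta}\to0$ (geometric decay up to a polynomial prefactor from possible Jordan blocks), whence $P_{\sur}=0$. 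One residual caveat, which the paper also glosses over: at resonant $\tau$ where $e^{-iE_k\tau}=e^{-iE_{k'}\tau}$ for $k\neq k'$, $U$ acquires extra degeneracy and such a unimodular eigenvector can be a dark state of $U$ that is \emph{not} in the span of the sector-by-sector $\ket{\zeta_k^j}$; the proposition as stated is safe only for generic $\tau$.
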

\begin{proof}
    For any $\ket{\eta_k^j}$, we obtain:
    \begin{eqnarray}
    \mathbb{P}_{\mathcal{H}_{\zeta}}\ket{\eta_{k}^j} &=& \sum_{k'}\sum_{j'=1}^{g_{k'} - l_{k'}} \ket{\zeta_{k'}^{j'}}\bra{\zeta_{k'}^{j'}}\ket{\eta_{k}^j} \nonumber \\
        &=&\sum_{k'}\sum_{j'=1}^{g_{k'} - l_{k'}} \ket{\zeta_{k'}^{j'}} \delta_{k'k}\bra{\zeta_{k}^{j'}}\ket{\eta_{k}^j} \nonumber \\
        &=& \sum_{j'=1}^{g_{k} - l_{k}} \ket{\zeta_{k}^{j'}}\bra{\zeta_{k}^{j'}}\ket{\eta_{k}^j} \nonumber \\
        &=& \sum_{j'=1}^{g_{k} - l_{k}} \ket{\zeta_{k}^{j'}}\bra{\zeta_{k}^{j'}} \sum_{i=1}^{\tilde r} c_{k,i}^j \mathbb{E}_k \ket{d_i} \nonumber \\
        &=& \sum_{j'=1}^{g_{k} - l_{k}} \sum_{i=1}^{\tilde r} c_{k,i}^j \ket{\zeta_{k}^{j'}}\braket{\zeta_{k}^{j'}}{d_i} \nonumber \\
        &=& 0.
    \end{eqnarray}
Here, the second to third line is followed from Eq.~\eqref{eq:bright_mus}, and the last line is due to the definition of dark states (see Eq.~\eqref{eq:d_zetak}). Therefore, the survival probability, $P_{\sur}=0$ and consequently $P_{\det}=1$ corresponding to $\ket{\eta_k^j}\forall{j,k}$.
\end{proof}
Before calculating $P_{\det}$ for specific system configurations, we shall discuss some generic features.
\begin{proposition}
Independent of the existence of dark states in the system, if the initial state can be written as a linear combination of only bright energy states of the systems, i.e., $\ket{\phi(0)}=\sum_{j,k}c_{k}^j \ket{\eta_k^j}$ for any value of $c_k^j$ $\forall j,k$, such that $\sum_{j,k}|c_k^j|^2=1$, the total first detection probability, $P_{\det}$, is unity for that initial state.
\label{prop:phin_bright}
\end{proposition}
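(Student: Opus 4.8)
The plan is to reduce the claim directly to Proposition \ref{prop:bright_pdet1} and the expansion of $P_{\det}$ in the bright basis established in Eq.~\eqref{eq:pdet_bright}. First I would recall that the bright states $\left\{\ket{\eta_k^j}\right\}$, ranging over all distinct energy levels $k$ and over $j=1,\dots,l_k$ with $l_k=\rank(\mathcal{A}_k)$, together with the dark states $\left\{\ket{\zeta_k^j}\right\}$ form an orthonormal basis of the full Hilbert space (this is exactly the content of Eq.~\eqref{eq:e_k_proj} summed over $k$, together with $\mathbb{P}_{\mathcal{H}_\zeta}+\mathbb{P}_{\mathcal{H}_\eta}=\mathbb{I}$). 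In particular the bright states are mutually orthonormal both within a fixed energy sector (by the Gram--Schmidt construction of Eq.~\eqref{eq:bright_mus}) and across sectors (since eigenvectors of $H$ with distinct eigenvalues are orthogonal), so $\braket{\eta_k^j}{\eta_{k'}^{j'}}=\delta_{k,k'}\delta_{j,j'}$.

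Next I would take the hypothesis $\ket{\phi(0)}=\sum_{j,k}c_k^j\ket{\eta_k^j}$ with $\sum_{j,k}|c_k^j|^2=1$ and simply evaluate the bright-subspace overlap appearing in Eq.~\eqref{eq:pdet_bright}: for each $(k,j)$, orthonormality gives $\braket{\eta_k^j}{\phi(0)}=c_k^j$, hence
\begin{eqnarray}
    P_{\det}=\sum_{k}\sum_{j=1}^{l_k}\left|\braket{\eta_k^j}{\phi(0)}\right|^2=\sum_{k}\sum_{j=1}^{l_k}|c_k^j|^2=\sum_{j,k}|c_k^j|^2=1.
\end{eqnarray}
Equivalently, one can argue via the dark projector: since $\mathbb{P}_{\mathcal{H}_\zeta}\ket{\eta_k^j}=0$ for every $k,j$ by Proposition \ref{prop:bright_pdet1}, linearity gives $\mathbb{P}_{\mathcal{H}_\zeta}\ket{\phi(0)}=0$, so $P_{\sur}=\bra{\phi(0)}\mathbb{P}_{\mathcal{H}_\zeta}\ket{\phi(0)}=0$ and therefore $P_{\det}=1-P_{\sur}=1$. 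Either route closes the proof, and I would present the second one since it makes transparent why the statement holds \emph{independently of whether dark states exist}: the assertion is really just that a superposition of states each annihilated by $\mathbb{P}_{\mathcal{H}_\zeta}$ is itself annihilated by $\mathbb{P}_{\mathcal{H}_\zeta}$.

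There is essentially no substantive obstacle here; the only point requiring a little care is the bookkeeping of the index ranges, namely that the sum $\sum_{j,k}$ in the hypothesis is understood to run over precisely the same index set $\{(k,j): 1\le j\le l_k\}$ that indexes the bright basis, so that the normalization condition $\sum_{j,k}|c_k^j|^2=1$ is consistent with $\ket{\phi(0)}$ being a unit vector. Once that is fixed, the result is immediate from orthonormality of the bright basis and Proposition \ref{prop:bright_pdet1}, with no reliance on the detailed determinant formulas for the $\ket{\zeta_k^j}$ or $\ket{\eta_k^j}$.
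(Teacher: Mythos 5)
Your proposal is correct and follows essentially the same route as the paper: the paper likewise observes that the decomposition $\ket{\phi(0)}=\mathbb{P}_{\mathcal{H}_{\zeta}}\ket{\phi(0)}+\mathbb{P}_{\mathcal{H}_{\eta}}\ket{\phi(0)}$ collapses to $\ket{\phi(0)}=\mathbb{P}_{\mathcal{H}_{\eta}}\ket{\phi(0)}$ and then reads off $P_{\det}=\braket{\phi(0)}{\phi(0)}=1$ from Eq.~\eqref{eq:pdet_bright1}, which is exactly your second route (and your first route is the same computation written out in coefficients). Your added remark about the index bookkeeping is a reasonable clarification but not a substantive difference.
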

\begin{proof}
    Since the initial state is a linear combination of the bright states only, Eq.~\eqref{eq:phiin} reduces to 
    $\ket{\phi(0)}=\mathbb{P}_{H_{\eta}}\ket{\phi(0)}$ where $c_k^j=\braket{\eta_k^j}{\phi(0)}$ and following Eq.~\eqref{eq:pdet_bright1}, we can write $P_{\det}=\braket{\phi(0)}{\phi(0)}=1$. 
\end{proof}
Furthermore, when the initial state is the linear combination of the vectors in the detector subspace, then measurements detect the return of the particle in the subspace defined as the return problem \cite{Grunbaum-schur-func2014}. Therefore, if the initial state is $\ket{\phi(0)}=\sum_{i=1}^{\tilde r} e_i \ket{d_i}$,  $\mathbb{P}_{\mathcal{H}_{\zeta}}\ket{\phi(0)}=0$ from the definition of dark states  which means $\ket{\phi(0)}=\mathbb{P}_{H_{\eta}}\ket{\phi(0)}$ and from Proposition~\ref{prop:phin_bright}, we immediately obtain the following corollary:
\begin{corollary}
    For initial states which are linear combination of detector states (as mentioned in Eq.~\eqref{eq:detector_meas}), i.e., $\ket{\phi(0)}=\sum_{i=1}^{\tilde r} e_i \ket{d_i}$, termed as a return problem, $P_{\det}=1$.
    \label{cor:return}
\end{corollary}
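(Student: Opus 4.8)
The plan is to deduce this from Proposition~\ref{prop:phin_bright} by showing that an initial state supported entirely on the detector subspace is automatically a linear combination of bright energy states only. First I would record the elementary fact that every stationary dark state is orthogonal to each detector vector: by the defining property in Eq.~\eqref{eq:d_zetak} we have $\mathbb{D}\ket{\zeta_k^j}=0$ for all $k,j$, and since $\mathbb{D}=\sum_{i=1}^{\tilde r}\ketbra{d_i}{d_i}$ with $\braket{d_i}{d_{i'}}=\delta_{i,i'}$, applying $\bra{d_{i'}}$ to $\mathbb{D}\ket{\zeta_k^j}=0$ yields $\braket{d_{i'}}{\zeta_k^j}=0$ for every $i'$. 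The same holds for any fully dark degenerate level of ``Case~I'', whose eigenvectors $\ket{E_{k,m}}$ satisfy $\mathbb{D}\ket{E_{k,m}}=0$ directly. Hence the dark subspace $\mathcal{H}_\zeta$ is orthogonal to $\operatorname{span}\{\ket{d_i}\}_{i=1}^{\tilde r}$.

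Next, for the prescribed initial state $\ket{\phi(0)}=\sum_{i=1}^{\tilde r}e_i\ket{d_i}$ I would compute the dark-space component
\begin{equation}
    \mathbb{P}_{\mathcal{H}_{\zeta}}\ket{\phi(0)}=\sum_{k,j}\ket{\zeta_k^j}\braket{\zeta_k^j}{\phi(0)}=\sum_{k,j}\ket{\zeta_k^j}\sum_{i=1}^{\tilde r}e_i\braket{\zeta_k^j}{d_i}=0,
\end{equation}
using the orthogonality just established. Since $\mathbb{P}_{\mathcal{H}_{\zeta}}+\mathbb{P}_{\mathcal{H}_{\eta}}=\mathbb{I}$, this forces $\ket{\phi(0)}=\mathbb{P}_{\mathcal{H}_{\eta}}\ket{\phi(0)}$, i.e.\ $\ket{\phi(0)}$ lies wholly in the bright subspace and can therefore be written as $\ket{\phi(0)}=\sum_{k,j}c_k^j\ket{\eta_k^j}$ with $c_k^j=\braket{\eta_k^j}{\phi(0)}$ (and $\sum_{k,j}|c_k^j|^2=1$ by normalization). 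Proposition~\ref{prop:phin_bright} then immediately gives $P_{\det}=1$; alternatively one can substitute $\braket{\zeta_k^j}{\phi(0)}=0$ directly into Eq.~\eqref{eq:pdet_dark} to obtain $P_{\det}=1-0=1$.

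I do not expect any genuine obstacle here: once the orthogonality $\mathcal{H}_\zeta\perp\operatorname{span}\{\ket{d_i}\}$ is in place, the corollary is a one-line consequence of Proposition~\ref{prop:phin_bright}. The only step requiring a little care is the passage from $\mathbb{D}\ket{\zeta_k^j}=0$ to $\braket{d_i}{\zeta_k^j}=0$, which relies on the mutual orthonormality of the detector states assumed in Eq.~\eqref{eq:detector_meas}, and on remembering that the argument must also cover the fully dark degenerate levels of ``Case~I''. It is worth remarking, finally, that the conclusion is independent of the stroboscopic period $\tau$ and of the particular weights $e_i$, in keeping with the interpretation of this setting as a return problem.
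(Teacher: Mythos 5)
Your argument is correct and follows essentially the same route as the paper: the text preceding the corollary deduces $\mathbb{P}_{\mathcal{H}_{\zeta}}\ket{\phi(0)}=0$ ``from the definition of dark states,'' concludes $\ket{\phi(0)}=\mathbb{P}_{\mathcal{H}_{\eta}}\ket{\phi(0)}$, and invokes Proposition~\ref{prop:phin_bright}. You merely make explicit the intermediate step that $\mathbb{D}\ket{\zeta_k^j}=0$ together with the orthonormality of the $\ket{d_i}$ forces $\braket{d_i}{\zeta_k^j}=0$, which is a faithful elaboration of the paper's one-line justification.
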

Apart from the initial state, the dependency of $P_{\det}$ on the rank and position of the detector subspace can be assessed from the study of dark and bright states. From Eq.~\eqref{eq:pdet_dark}, it is evident that if no dark state exists in a system, $P_{\det}$ is surely unity, independent of any initial state. Moreover, the existence of the dark states is related to the features of $\mathcal{A}_k$ as mentioned in Proposition~\ref{prop:dark_number}. We will now establish a connection between the characteristics of the detection subspace and the deterministic nature of the measurement-induced quantum walk.
\begin{theorem}
    Let $H$ be a discrete, bounded, degenerate Hamiltonian of the finite graph defined by vertices, $V=\left\{\ket{l}|\sum_{l=1}^L\ketbra{l}{l}=\mathbb I\text{ and }\braket{l}{m}=\delta_{l,m}\right\}_{l=1}^L$ which drives a system periodically in measurement induced quantum walk. Suppose $H$ has degeneracy $g_k$ corresponding to energy level $E_k$. By considering a subspace $V_s = \left\{\ket{d_j}\right\}_{j=1}^{\tilde r<L}$ of $V$ where $\ket{d_j}$ can be any $\ket l$ with $\braket{d_i}{d_j}=\delta_{i,j}$, its detection  $\mathbb D = \sum_{j=1}^{\tilde r}\ketbra{d_j}{d_j}$ is performed deterministically, i.e., $P_{\det}=1$ independent of any localized initial state $\ket{\phi(0)}=\ket{l}\in V$ if and only if both the conditions are satisfied: 
    \begin{enumerate}
        \item[C1.]  for each non-degenerate $E_k$, $\braket{d_j}{E_{k}}\neq 0$, at least for one $d_j\in V_s$, and
        \item[C2.] for each degenerate $E_k$, $\rank(\mathcal{A}_k)=g_k$ which is possible when $\tilde r\geq \max_{k}g_k$.
    \end{enumerate}
    are satisfied.
    \label{th1}
\end{theorem}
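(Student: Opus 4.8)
The plan is to reduce Theorem~\ref{th1} to a clean dichotomy --- \emph{$P_{\det}=1$ for every vertex-localized input $\iff$ the system carries no dark state} --- and then to read off C1 and C2 from the case analysis already performed above (the non-degenerate criterion around Eq.~\eqref{eq:dark_subspace}, and Theorem~\ref{prop:dark_number} for the degenerate blocks).

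\emph{Step 1 (global $\leftrightarrow$ local).} I would start from Eq.~\eqref{eq:pdet_dark}, which for $\ket{\phi(0)}=\ket l$ reads $P_{\det}(\ket l)=1-\sum_{k}\sum_{j=1}^{\dim(\mathcal N_{\mathcal A_k})}|\braket{\zeta_k^j}{l}|^2$. If the dark subspace is trivial the sum is empty and $P_{\det}=1$ for \emph{all} initial states, in particular all $\ket l$. For the converse (contrapositive), if the dark subspace is nontrivial, pick any orthonormal dark basis vector $\ket{\zeta_{k_0}^{j_0}}$; since $\{\ket l\}_{l=1}^{L}$ is a complete orthonormal basis of the graph Hilbert space, $\sum_{l}|\braket{\zeta_{k_0}^{j_0}}{l}|^2=1$, so $\braket{\zeta_{k_0}^{j_0}}{l_0}\neq 0$ for some vertex $l_0$, whence $P_{\det}(\ket{l_0})\le 1-|\braket{\zeta_{k_0}^{j_0}}{l_0}|^2<1$. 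Thus deterministic detection for every localized initial state is equivalent to the absence of dark states.

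\emph{Step 2 (split over energies, then match to C1 and C2).} Next I would use that the stationary dark states are simultaneous eigenvectors of $U(\tau)$ and $\mathbb S$, so --- with $\tau$ generic enough that $e^{-iE_k\tau}$ separates distinct energies, as assumed throughout --- every dark state lies inside a single energy eigenspace and $\mathbb P_{\mathcal H_\zeta}=\sum_k\sum_{j=1}^{\dim(\mathcal N_{\mathcal A_k})}\ketbra{\zeta_k^j}{\zeta_k^j}$ in accordance with Eq.~\eqref{eq:e_k_proj}. Hence ``no dark state in the system'' $\iff$ ``no dark state supported on $E_k$'' for every distinct $E_k$, and the two sectors can be treated independently. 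For a non-degenerate $E_k$, Eq.~\eqref{eq:dark_subspace} and the remark below it make $\ket{E_k}$ dark exactly when $\braket{d_j}{E_k}=0\ \forall j$, so the non-degenerate sectors contribute nothing dark iff C1 holds. For a degenerate $E_k$, Theorem~\ref{prop:dark_number} together with the rank--nullity identity~\eqref{Rank-null} gives $\dim(\mathcal N_{\mathcal A_k})=g_k-\rank(\mathcal A_k)$, so the degenerate sectors contribute nothing dark iff $\rank(\mathcal A_k)=g_k$ for each of them, i.e.\ C2 (Case~I, where all overlaps vanish, being the extreme violation $\rank(\mathcal A_k)=0<g_k$). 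Combining over all $k$ and applying Step~1 closes the equivalence. The size remark then follows since $\mathcal A_k\in\mathbb C^{\tilde r\times g_k}$ forces $\rank(\mathcal A_k)\le\min(\tilde r,g_k)$, so $\rank(\mathcal A_k)=g_k$ requires $\tilde r\ge g_k$ and, simultaneously for all degenerate levels, $\tilde r\ge\max_k g_k$ (necessary, not sufficient).

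\emph{Main obstacle.} The routine parts are genuinely routine; the only care-demanding points are (i) that probing with the whole position basis really ``sees'' every nonzero dark vector --- pure completeness, used in Step~1 --- and (ii) that the dark subspace is \emph{exactly} the orthogonal sum of the per-$E_k$ dark subspaces cut out by the matrices $\mathcal A_k$, which hinges on the stationarity of the dark states and on the genericity of $\tau$ standing in the background of the whole construction. Once these are granted, C1 and C2 drop out directly from results already in hand, so the proof is essentially an assembly argument rather than a new computation.
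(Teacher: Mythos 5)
Your proposal is correct and follows essentially the same route as the paper's proof: both directions reduce to the equivalence ``$P_{\det}=1$ for all vertex-localized inputs $\iff$ no dark states,'' established via completeness of the vertex basis, and then C1 and C2 are read off from Eq.~\eqref{eq:dark_subspace} and Theorem~\ref{prop:dark_number} respectively. Your version is slightly more explicit on the converse (exhibiting a vertex $\ket{l_0}$ with nonzero dark overlap rather than arguing $\mathbb{P}_{\mathcal{H}_\eta}=\mathbb{I}$) and usefully flags the background genericity assumption on $\tau$, but these are refinements of, not departures from, the paper's argument.
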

\begin{proof}
    If \textit{C1} is satisfied, from definition (Eq.~\eqref{eq:dark_subspace}), the non-degenerate energy subspace has no dark states. From \textit{C2}, if rank of the detector, $\tilde r\geq \max_k g_k$ and by performing row reduction on $\mathcal{A}_k$ we find $\rank{\mathcal{A}_k}=g_k$ for each degenerate energy $E_k$, $\dim{(\mathcal{N}_{\mathcal{A}_k})}=0\forall k$. In this case, it follows from Proposition~\ref{prop:dark_number} that there exists no dark states corresponding to degenerate energy subspace of the system. Therefore, $P_{\det}=1$ as evident from Eq.~\eqref{eq:pdet_dark}.

    Now, we concentrate on the case when $P_{\det}=1$ independent of the initial localized  states of the corresponding system it implies \textit{C1} and \textit{C2}. Mathematically, this can be written as $\bra{l}\mathbb{P}_{\mathcal{H}_{\eta}}\ket{l}=1$ $\forall l\in V$ which follows from Eq.~\eqref{eq:pdet_bright1}. Now, as $\left\{\ket{l}\right\}_{l=1}^L$ forms an orthonormal basis, it must be the case that $\mathbb{P}_{\mathcal{H}_{\eta}}=\mathbb I$ which means the system has no dark states. Therefore, from Proposition~\ref{prop:dark_number}, we can see that $\rank(\mathcal{A}_k)=g_k$ for each degenerate energy subspace, and hence \textit{C2} is true. Moreover, the nonexistence of dark states in non-degenerate energy subspace implies \textit{C1} is obvious from Eq.~\eqref{eq:dark_subspace}.
\end{proof}

\section{Alternative method for obtaining total detection probability}
\label{sec:matrix_method}

Without delving into the detailed physics of dark and bright states, we can obtain $P_{\det}$ solely through the utilization of Eq.~\eqref{eq:pdet} by saturating the summation to a finite value. However, this method is computationally inefficient and can be a time-consuming affair. To overcome this issue, we propose a reformulation of $P_{\det}$ for the detection of a subspace $V_s$ by detector $\mathbb D$, following the methodologies outlined in Ref. \cite{Varbanov2008} and \cite{Kessler2021}. 

Before laying out the result, let us define few matrices $S,W,T_j$ of dimension $L\times L$. Firstly, the energy spectrum of $H$ is given by $\{\epsilon_p,\ket{\epsilon_p}\}_{p=0}^{L-1}$ where $\epsilon_p$ is the eigenvalue of $H$ with an eigenvector $\ket{\epsilon_p}$. Now, the elements of the aforementioned matrices are given by $S_{kp} = \delta_{kp}\braket{\epsilon_p}{\phi(0)}$, $W_{kp}=1$, $(T_{j})_{kp}=\delta_{kp}\braket{\epsilon_p}{d_j}$ and $C=\mathbb{I} - \sum_{j=1}^{\tilde r}T_{j}^{*}WT_{j}$. The unitary evolution operator $U$ can be written in the energy eigenbasis as $U_{kp} = \delta_{kp}e^{-i \epsilon_p \tau}$. Moreover, for any operator $O$, we characterize $\bar{O}\equiv O^*\otimes O$. Finally, we define $\mathcal{L} = \mathbb{I} - \bar{U}\bar{C}$ which is an $L^2\times L^2$ dimensional matrix.

\begin{proposition}
    In a finite graph of $L$ vertices, the probability of first successful detection after infinite number of measurements (with periodicity $\tau$) in a subspace of dimension $\tilde r<L$ can be written as $P_{\det} = \sum_{i=1}^{\tilde r} \Tr\left[ \mathcal{L}^{-1} \bar{U} \bar{S} \bar{W} \bar{T_{i}^{*}} \right]_{L^2\times L^2}$.
\end{proposition}
\begin{proof}
    First, we show that Eq.~\eqref{Fn for subspace} can be rewritten as a trace of product of $L^2\times L^2$ matrices of the form,
\begin{eqnarray}
    F_n = \sum_{i=1}^{\tilde r}\Tr\left[ (\bar{U} \bar{C})^{n-1} \bar{U} \bar{S} \bar{W} \bar{T_{i}^{*}} \right]_{L^2\times L^2},
    \label{eq:fn}
\end{eqnarray}
(see Appendix \ref{Appndx:Subspace}). Finally, using Eq.~\eqref{eq:pdet}, the probability of first successful detection is found to be 
\begin{eqnarray}
    \nonumber P_{\det}
        &=& \sum_{n=1}^{\infty} \sum_{i=1}^{\tilde r} \Tr\left[ (\bar{U}\bar{C})^{n-1} \bar{U} \bar{S} \bar{W} \bar{T_{i}^{*}} \right] \\ 
        &=& \sum_{i=1}^{\tilde r} \Tr\left[ \mathcal{L}^{-1} \bar{U} \bar{S} \bar{W} \bar{T_{i}^{*}} \right].
        \label{eq:mat_pdet}
\end{eqnarray}
Note that instead of computing the inverse of $\mathcal{L}$, we need to calculate pseudoinverse \cite{golub2013matrix}, $\mathcal{L}^{-1}$ due to the fact that $\mathcal{L}$ becomes singular when the spectrum of $U(\tau)$ is degenerate. However,  by definition pseudoinverse reduces to traditional inverse when $\mathcal{L}$ is non-singular. Although we perform the measurements periodically (with fixed $\tau$), it is clear that during measurements at random times, the above formulation can be very efficient \cite{Kessler2021}. 
\end{proof}

\section{Interacting system with moderate range hopping}
\label{sec:interacting}

Let us utilize the concepts, namely the dark and bright states, developed in Sec.~\ref{sec:br_vs_dark_energy} for a  system with nearest and next nearest-neighbor hopping. The initial states are localized on the graph nodes, and the rank of the detector is varied for detection of particle in subspaces of higher dimensions. The entire investigations will also highlight  the advantages and limitations of the methods discussed  in Secs. \ref{sec:br_vs_dark_energy} and \ref{sec:matrix_method}. 

\subsection{Total detection probability of subspace for nearest-neighbor interacting system}

Let us consider a nearest-neighbor (NN) interacting lattice of $L$ sites with periodic boundary condition described by the Hamiltonian,
\begin{eqnarray}
    H = -\gamma \sum_{i=0}^{L-1} \ketbra{i+1}{i} + \ketbra{i}{i+1},
    \label{H for NN}
\end{eqnarray}
where $L$ is an even integer and set $\gamma=1$ without loss of generality. The eigenvalues and eigenvectors of the above system are given by \cite{Dhar2015}
\begin{eqnarray}
    \nonumber \epsilon_p &=& -2\gamma  \cos(\frac{2\pi p}{L}), \,\text{and}  \\
    \ket{\epsilon_p} &=& \frac{1}{\sqrt{L}} \sum_{j=0}^{L-1} \exp(\frac{i2\pi p j}{L})\ket j,
    \label{NN evals and evecs}
\end{eqnarray}
respectively where $p=0,1,\ldots,(L-1)$. Notice that only $\ket{\epsilon_0}$ and $\ket{\epsilon_{L/2}}$ are non-degenerate eigenstates. By relabeling the spectrum, we can write the non-degenerate subspace as $\left\{E_k,\hspace{0.1cm} \ket{E_{k}}|k=0,L/2\right\}$ whereas the doubly degenerate subspace is $\left\{E_k,\hspace{0.1cm} \left\{\ket{E_{k,m}}\right\}_{m=1}^2|k=1,2,\ldots,\frac{L}{2}-1\right\}$ with
\begin{eqnarray}
    \nonumber \ket{E_{0}}=\ket{\epsilon_0}; \hspace{0.3cm} \ket{E_{L/2}} = \ket{\epsilon_{L/2}},\\ \nonumber
    \left. \begin{aligned}
        \ket{E_{k,1}}&=\ket{\epsilon_k}\\
        \ket{E_{k,2}}&=\ket{\epsilon_{L-k}}
    \end{aligned}
    \right\} \text{  for }k = 1,2,\ldots,\frac{L}{2}-1.\\
\end{eqnarray}

We take the initial state as a localized state in a position basis, $\ket s$.  Let us now vary the rank of the detector $\mathbb{D}$, denoted by $\tilde{r}$ which belongs to the position basis. Note that $\braket{E_0}{d}\neq0$ and $\braket{E_{L/2}}{d}\neq0$ where $\ket{d}$ is any localized position state with the position being $d=0,\ldots,L-1$  of the lattice. Such observation suggests that irrespective of $\tilde r$, non-degenerate energy subspace is completely bright, i.e., $\ket{\eta_0} = \ket{\epsilon_0}$ and  $\ket{\eta_{L/2}} = \ket{\epsilon_{L/2}}$ are bright states.

\subsubsection{Rank-1 detection of particle ($\tilde{r}=1$)}
Let us study the problem where a rank-$1$ detector state,  $\mathbb{D}=\ketbra{d}$, is used to detect the particle which is in some localized position in the lattice  \cite{Barkai2020DarkState}. In the rank-$1$ scenario, there can be only a single bright state corresponding to each distinct energy $E_k$ for $k=0,1,\ldots, L/2$. Therefore, Eq.~\eqref{eq:pdet_bright1} corresponding to the first detection probability reduces to \cite{Barkai2020DarkState}
\begin{eqnarray}
    P_{\det} &=& \sum_{k} \frac{|\bra{s}\mathbb{E}_k\ket{d}|^2}{\bra{d}\mathbb{E}_k\ket{d}}.
    \label{eq:sed}
\end{eqnarray}
In case of doubly degenerate energy levels where $\epsilon_k = \epsilon_{L-k}$ with $k \ne 0, L/2$, the bright and the dark states respectively take the form as
\begin{eqnarray}
    \ket{\eta_{k}} &=& \frac{1}{\sqrt{2}} \left(e^{-\frac{i 2 \pi k d}{L}}\ket{\epsilon_k} + e^{\frac{i 2 \pi k d}{L}}\ket{\epsilon_{L-k}}\right),\, \text{and}
    \label{eq:bright_r1}\\
    \ket{\zeta_{k}} &=& \frac{1}{\sqrt{2}} \left(e^{-\frac{i 2 \pi k d}{L}}\ket{\epsilon_k} - e^{\frac{i 2 \pi k d}{L}}\ket{\epsilon_{L-k}}\right).
    \label{eq:dark_r1}
\end{eqnarray}
Finally, from both Eqs.~\eqref{eq:sed} and \eqref{eq:pdet_bright} we can find
\begin{eqnarray}
    P_{\det}( s) &=& \frac{2}{L} + \frac{2}{L}\sum_{k=1}^{\frac{L}{2}-1}\cos^2\left[\frac{2\pi k(d-s)}{L} \right]  \nonumber \\
                &=& \begin{cases}
                        1 &s = d, d+\frac{L}{2}, \\
                        1/2 &\text{otherwise}.
                    \end{cases}
    \label{eq:pdet_r1}
\end{eqnarray}
Note that $s=d$ corresponds to the return problem as discussed in Corollary~\ref{cor:return}. The case where the initial state is diametrically opposite to the detector, i.e., $s=d+\frac{L}{2}$, we can write
\begin{eqnarray}
   \nonumber \ket{s}&=&\ket{d+\frac{L}{2}} \\ \nonumber &=& \frac{1}{\sqrt{L}} \left[ \ket{\eta_0} + (-1)^{d} \ket{\eta_{L/2}} + \sqrt{2}\sum_{k=1}^{L/2 - 1} (-1)^{k} \ket{\eta_{k}}\right],\\
    \label{eq:special_eqn}
\end{eqnarray}
as a linear combination of bright energy states only which gives $P_{\det}=1$ following Proposition~\ref{prop:phin_bright}. Moreover, by fixing $L=10$, we calculate $P_{\det}$ numerically by Eq.~\eqref{eq:mat_pdet} without delving into the details of dark and bright states which matches with the above result.

\subsubsection{Identification of particle with rank-2 detector ($\tilde{r}=2$)}
Let us now increase the rank of $\mathbb{D}$ to be $2$, which can be written as
\begin{eqnarray}
    \mathbb D = \ketbra{d_1} + \ketbra{d_2}.
    \label{rank-2_arb}    
\end{eqnarray}
Due to the translation symmetry in $H$ (Eq.~\eqref{H for NN}), $P_{\det}$ must remain invariant by a constant shift, i.e., $P_{\det}(s,d_1,d_2)=P_{\det}(s+c,d_1+c,d_2+c)$. Therefore, we fix $d_1=0$ and vary $d_2$ from $1$ to $\frac{L}{2}$ .i.e., mathematically we can write
\begin{eqnarray}
    \mathbb D = \ketbra{0} + \ketbra{d_2}.
    \label{rank-2-D}
\end{eqnarray}
In case of doubly degenerate energy subspaces, we calculate the matrix $\mathcal{A}_k$ as
\begin{eqnarray}
\mathcal{A}_k &=& \begin{bmatrix}
    \braket{0}{\epsilon_k} & \braket{0}{\epsilon_{L-k}} \\ 
    \braket{d_2}{\epsilon_k} & \braket{d_2}{\epsilon_{L-k}} \\ 
\end{bmatrix} \\
&=& \frac{1}{L}\begin{bmatrix}
    1 & 1 \\
    \exp(\frac{i2\pi k d_2}{L}) & \exp(\frac{-i2\pi k d_2}{L}) \\
\end{bmatrix}.
\end{eqnarray}
Moreover, the row echelon form of $\mathcal{A}_k$ can be written as 
\begin{eqnarray}
    REF(\mathcal{A}_k)=\begin{bmatrix}
        1 & 1 \\
        0 & 2 i\sin\left(\frac{2\pi k d_2}{L}\right) \\  
    \end{bmatrix}.
    \label{eq:ref_r2_nn}
\end{eqnarray}
Let us now discuss the different cases of null space of $\mathcal{A}_k$.

 {\it Case I.} If $d_2 \ne \frac{m_k L}{2k}$ for some $ m_k\in \mathbb{Z}^+$ where $\mathbb{Z}^+$ denotes the set of positive integers, it is evident from Eq.~\eqref{eq:ref_r2_nn} that  $\rank(\mathcal{A}_k$) $=2$. Therefore, the nullspace is trivial corresponding to energy $E_k$. Consequently, following Theorem \ref{prop:dark_number}, no dark states exist in the  entire energy spectrum if the above condition satisfies for all $k$ which means $\mathbb{P}_{\mathcal{H}_\eta}=\sum_k \mathbb{E}_k =\mathbb I$. Therefore, the first detection probability is 
\begin{eqnarray}
    P_{\det} =  1 \text{ if }d_2\neq \frac{m_k L}{2k} \text{  } \forall k,
\end{eqnarray}
which clearly demonstrates the utility of Theorem \ref{prop:dark_number}.

{\it Case II.} In the case $d_2 = \frac{m_k L}{2k}$ for some $ m_k\in \mathbb{Z}^+$, we have
\begin{eqnarray}
\mathcal{A}_k &=& \frac{(-1)^{m_k}}{L}
\begin{bmatrix}
    1 & 1 \\
    1 & 1 \\
\end{bmatrix},
\end{eqnarray}
which have $\rank(\mathcal A_k)=1$ and $\dim(\mathcal {N_A}_k)=1$. Consequently, the dark and bright states are given by
\begin{eqnarray}
    \ket{\zeta_k} = \frac{1}{\sqrt{2}} \left(\ket{\epsilon_k} - \ket{\epsilon_{L-k}} \right),
    \label{eq:dark_r2}
\end{eqnarray}
and
\begin{eqnarray}
    \ket{\eta_k} = \frac{1}{\sqrt{2}} \left(\ket{\epsilon_k} + \ket{\epsilon_{L-k}} \right)
    \label{eq:bright_r2}
\end{eqnarray}
respectively. Specifically, for $d=L/2$ there exists $m_k$ such that $m_k=k \text{  }\forall k$. The bright state projector can be written as 
\begin{equation}
    \mathbb{P}_{\mathcal{H}_{\eta}} = \ketbra{E_0} + \ketbra{E_{L/2}} + \sum_{k=1}^{L/2 - 1} \ketbra{\eta_k}, 
\end{equation}
which gives 
\begin{eqnarray}
    \nonumber P_{\det}(s) &=& \frac{2}{L} + \frac{2}{L}\sum_{k=1}^{\frac{L}{2} - 1} \cos^{2}\left[ \frac{2\pi k s}{L} \right] \\ &=& \begin{cases}
                    1  & s= 0, L/2, \\
                    1/2 & \text{otherwise}.
              \end{cases}
    \label{eq:p_det_r2}          
\end{eqnarray}
Notice that in this case, $P_{\det}$, as shown in Eq.~\eqref{eq:p_det_r2} matches exactly with the first detection probability (Eq.~\eqref{eq:pdet_r1}) of rank- $1$ detector. The above result is a consequence of the fact that the dark and bright states (Eqs.~\eqref{eq:dark_r2} and \eqref{eq:bright_r2}) for $\mathbb D = \ketbra{0}{0}+\ketbra{L/2}{L/2}$ is same (upto an overall phase) as that for $\mathbb D = \ketbra{0}{0}$ or $\mathbb D = \ketbra{L/2}{L/2}$ which is given by Eqs.~\eqref{eq:bright_r1} and \eqref{eq:dark_r1}. To visualize the entire investigation, we carry out the analysis for a finite system-size, i.e., for fixed lattice sites, specifically $L=10$ and $L=20$.

\begin{figure}[h]
    \centering
    \includegraphics[width=\linewidth]{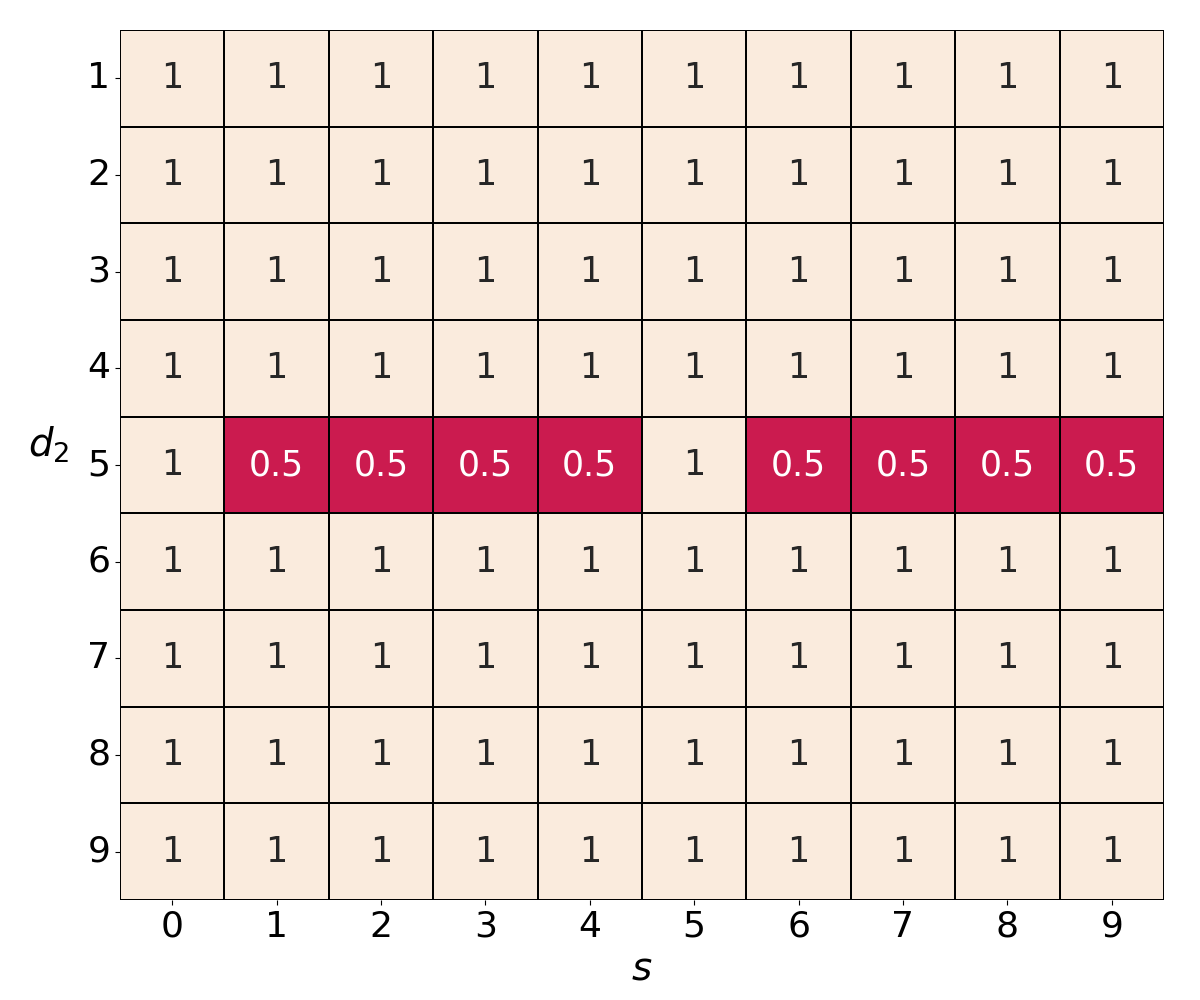}
    \caption{(Color online.) Total detection probability for detection of a particle in two-dimensional subspace for the cyclic graph $(L=10)$ as defined in Eq. (\ref{H for NN}) and detector as Eq. (\ref{rank-2-D}). Each column corresponds to fixed initial state $\ket{s}$ and detector $\mathbb{D} = \ketbra{0} + \ketbra{d_2}$ with $d_2 = 1,2,\cdots,9$. Each row corresponds to a fixed $\mathbb{D}$ ($d_2$ fixed) with varying initial state $\ket{s}$ where $s=0,1,\cdots 9$. The row with $d_2=5$ corresponds to detector $\mathbb{D}= \ketbra{0} + \ketbra{5}$ . The total detection probability for this subspace, $P_{\det}=0.5 \le 1$ for all initial states except the return problem (i.e., $s=0$ or $s=5$) due to the existence of dark states given by Eq.~\eqref{eq:dark_r2}. Except $d_2=5$, all other rows have $P_{\det}=1$ (as also seen from Theorem~\ref{th1}).}
    \label{fig:NN-rank-2-Pdet}
\end{figure}

\textit{(i) Example 1 }: $L=10$. Following Eq.~\eqref{eq:mat_pdet}, we compute $P_{\det}$ for system-size $L=10$ by varying $\ket{s}$ and $\ket{d_2}$ as shown in Fig.~\ref{fig:NN-rank-2-Pdet}, consistent with the above analysis of dark and bright states. For $d_2=5$, we have $m_k=k\forall k$, i.e., \textit{Case II} is satisfied. Moreover, both $s=0$ and $5$ are return problems having $P_{\det}=1$ which follows from Corollary~\ref{cor:return}. Except for $d_2\neq 5$, $d_2 \ne \frac{5 m_k}{k}$ $\forall d_2,k$ which correspond to \textit{Case I} and consequently according to Theorem~\ref{th1}, $P_{\det}=1$ independent of 
the initial state $\ket{s}$. In both cases, either of \textit{Case I} or \textit{Case II} is satisfied for all degenerate energy eigenvalues $k$. But this is not always true as one increases the system size. Note that from the framework of dark and bright energy subspace, when no dark states exist, it immediately implies that \(P_{det}\) is unity  which does not require  any numerical computation. 

\textit{(ii) Example 2}: $L=20$, \textit{dependency of \(P_{\det}\) on dark states.} To show the dependence of $P_{\det}$ on number of dark states of a system,  we take $L=20$. According to Eq.~\eqref{rank-2_arb}, we fix $d_1=0$ and vary $d_2=1,2,\ldots,19$, i.e. $\mathbb{D}=\ketbra{0} + \ketbra{d_2}$. For each doubly degenerate energy level $\{E_k|k=1,2,\ldots,9\}$ we have
\begin{eqnarray}
    \nonumber \mathcal{A}_k &=& \begin{bmatrix}
    \braket{0}{\epsilon_k} & \braket{0}{\epsilon_{20-k}} \\ 
    \braket{d_2}{\epsilon_k} & \braket{d_2}{\epsilon_{20-k}} \\ 
    \end{bmatrix} \\ \nonumber 
    &=& \frac{1}{20}\begin{bmatrix}
    1 & 1 \\
    \exp(\frac{i2\pi k d_2}{20}) & \exp(\frac{-i2\pi k d_2}{20}) \\
    \end{bmatrix}\\
    \implies  REF(\mathcal{A}_k)&=&\begin{bmatrix}
                1 & 1 \\
                0 & 2 i\sin\left(\frac{\pi k d_2}{10}\right) \\  
            \end{bmatrix}.
\end{eqnarray}
The possibilities of $\rank(\mathcal{A}_k)$ can be
\begin{equation}
    \rank(\mathcal{A}_k) = \begin{cases}
        2 & \text{ if } \frac{kd_2}{10} \ne m_k  \\
        1 & \text{ if } \frac{kd_2}{10} = m_k,
    \end{cases}
\end{equation}
for some $ m_k \in \mathbb{Z}^{+}$.
\begin{figure}
    \centering
    \includegraphics[width=\linewidth]{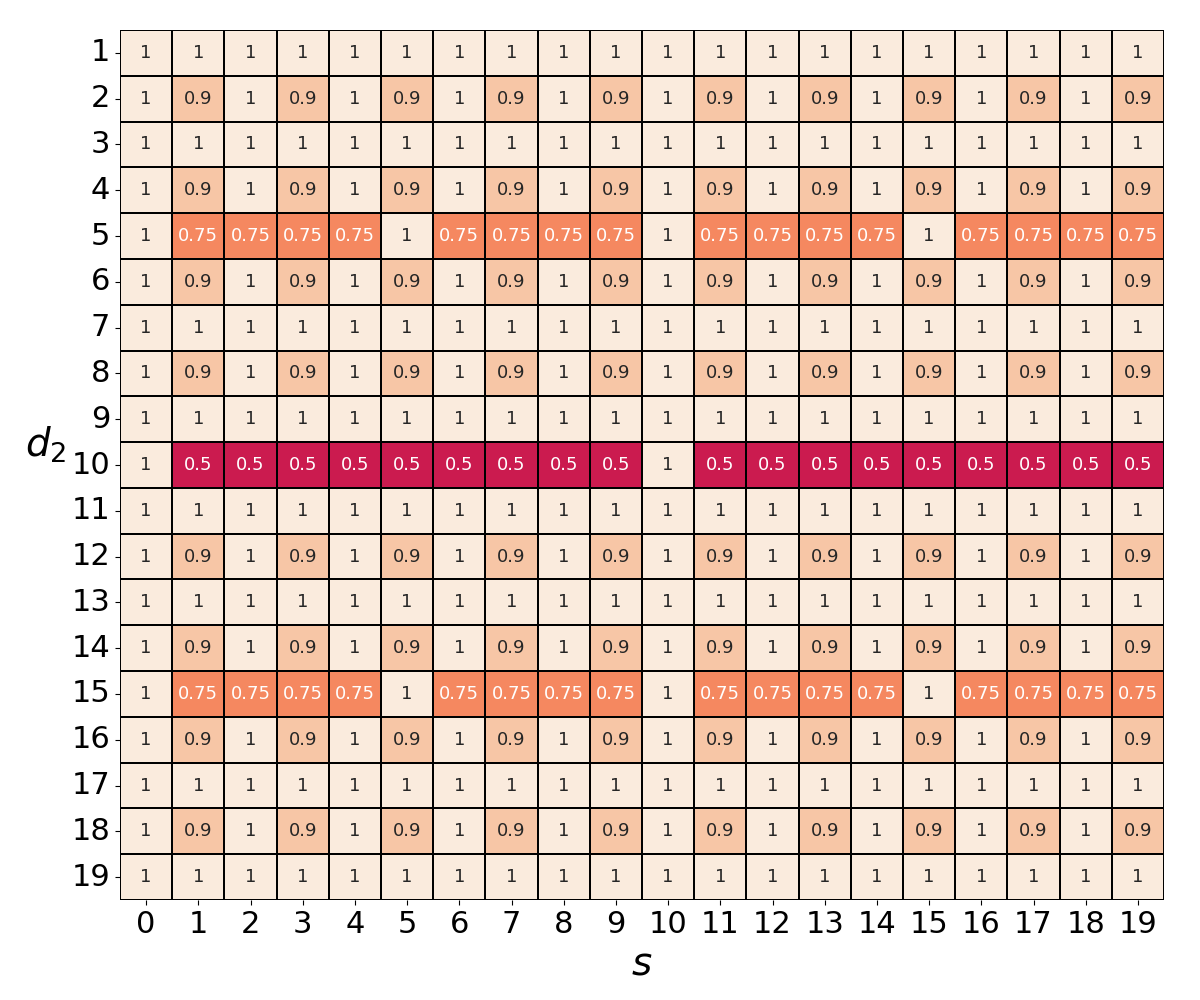}
    \caption{(Color online.) $P_{\det}$ for a cyclic graph with $L=20$. The two-dimensional subspace detector is $\mathbb{D}=\ketbra{0} + \ketbra{d_2}$, $d_2=1,2,\cdots,19$ similar to the case in Fig. \ref{fig:NN-rank-2-Pdet}. For a fixed initial state, i.e., for a fixed column, a range of values for $P_{\det}$ is observed which depends on the choice of the site,  $d_2$ . Corresponding to a particular value of $d_2$, the dark states present in the system change (see Table \ref{tab:pdet_l20}), thus influencing total detection probability $P_{\det}$. The presence of a fewer dark states corresponds to higher $P_{\det}$ for the subspace. For a similar reason as in Fig. \ref{fig:NN-rank-2-Pdet}, the row with $d_2=\frac{L}{2}=10$ has $P_{\det}=0.5$ for all initial states except $\ket{0}$ and $\ket{10}$ }. 
    \label{fig:L=20}
\end{figure}
We calculate $P_{\det}$ as shown in Fig.~\ref{fig:L=20}. Let us explain the probabilities serially:
\begin{enumerate}
    \item When value of $d_2$ is odd (except $5,15$), \textit{Case I} is satisfied $\forall k$, i.e., $P_{\det}=1$ independent of initial state $\ket s$ as there exist no dark energy states in the system.

    \item For even values of $d_2$ (except $10$), only the $E_5$ energy level has a dark state. 

    \item For $d_2=5,15$, the energy levels corresponding to $k=2,4,6,8$ consist of a single dark state each.

    \item Finally for $d_2=10$, each of  the energy levels of $k=2,4,5,6,8$ have a single dark state each.
\end{enumerate}
\textbf{Decrease in \(P_{\det}\) with increasing number of dark states.} To illustrate that the total detection probability  decreases monotonically with the increase of dark states, we fix the initial state $\ket{s}$ at site, $s=1$ and vary $d_2$ from Eq.~\eqref{rank-2-D}. See Table~\ref{tab:pdet_l20} for $P_{\det}$ by varying $\ket{d_2}$.
\begin{table}[h]
    \centering
    \begin{tabular}{|c|c|c|}
         \hline
        $d_2$& Total number of dark states & $P_{\det}$ \\
        \hline
        $1,3,7,9,11,13,17,19$& $0$& $1$\\
        \hline
        $2,4,6,8,12,14,16,18$& $1$& $0.9$\\
        \hline
        $5,15$& $4$& $0.75$\\
        \hline
        $10$& $5$& $0.5$ \\
        \hline 
    \end{tabular}
    \caption{Value of $P_{\det}$ and total number of dark states of the periodic NN system given in Eq.~\eqref{H for NN} with $L=20$ and $\mathbb{D}=\ketbra{0}{0}+\ketbra{d_2}{d_2}$, by varying $d_2$.}
    \label{tab:pdet_l20}
\end{table}
Despite the existence of dark states, the initial states that give $P_{\det}$ to be unity (e.g., $d_2=6$ and $s=4$) are due to Proposition~\ref{prop:phin_bright}. 


\subsubsection{Detection of three-dimensional subspace through rank-3 detector ($\tilde r = 3$)}

In case of rank-$3$ detector, we fix $d_1=0$, i.e., the detector has the form 
\begin{eqnarray}
    \mathbb D = \ketbra{0} + \ketbra{d_2} + \ketbra{d_3},
\end{eqnarray}
and vary $d_2=1, 2, \hdots L-1$, and $d_3 =1, 2, \hdots L/2-1$ with $d_2 \ne d_3$. Notice that we only vary $d_3$ from $1$ to $L/2 -1$ as the system is symmetric about the axis passing through $0$ and $L/2$. The matrix $\mathcal{A}_k$ and its row echelon form can be written as 
\begin{eqnarray}
\mathcal{A}_k&=& \begin{bmatrix}
    \braket{0}{\epsilon_k} & \braket{0}{\epsilon_{L-k}} \\
    \braket{d_2}{\epsilon_k} & \braket{d_2}{\epsilon_{L-k}} \\
    \braket{d_3}{\epsilon_k} & \braket{d_3}{\epsilon_{L-k}} 
\end{bmatrix} \\
&=& \frac{1}{L^{3/2}}  \begin{bmatrix}
                    1 & 1 \\
                    \exp(\frac{i 2 \pi k d_2}{L}) & \exp(\frac{-i 2 \pi k d_2}{L}) \\
                    \exp(\frac{i 2 \pi k d_3}{L}) & \exp(\frac{-i 2 \pi k d_3}{L}) \\ 
                \end{bmatrix},
\end{eqnarray}
and
\begin{eqnarray}
REF(\mathcal{A}_k)&=& \begin{bmatrix}
                1 & 1 \\
                0 & 2i\sin(\frac{2 \pi k d_2}{L}) \\
                0 & 2i\sin(\frac{2 \pi k d_3}{L}) \\ 
          \end{bmatrix},                
\end{eqnarray}
respectively. 

{\it Condition I. }We can see that $\rank(\mathcal{A}_k)=2$ as long as $ d_2\neq \frac{m_{k} L}{2k}$ or  $d_3\neq n d_2$ for some integer $m_{k}$ and $n\geq 2$ which shows that there is no dark state in degenerate subspace $E_k$.

{\it Condition II. }On the other hand, when $ d_2= \frac{m_{k} L}{2k}$ and  $d_3=n d_2$ are simultaneously satisfied, $\rank(\mathcal{A}_k)=1$ and the corresponding dark and bright states are same as in Eqs.~\eqref{eq:dark_r2} and \eqref{eq:bright_r2} respectively.

For $L=10$, the particle can always be detected deterministically (see Appendix~\ref{sec:app-nn_r3}) independent of $d_2$, $d_3$ and $s$ when $\tilde r=3$. In this case, only possible value of $m_k$ is $k$, i.e., $d_2=5$, but $d_3=5n$ does not exist inside the system. Therefore, {\it Condition I} is satisfied simultaneously for all degenerate energy levels, leading to the nonexistence of dark states in the system. This observation implies that $P_{\det}=1$ in accordance with Theorem~\ref{th1}. This is due to the fact that for any given subspace of rank $\tilde r=3$ in NN hopping model with $L=10$, we have $\tilde r>\max_k g_k=2$. Thus, we explain the deterministic nature of subspace detection in $L=10$ as shown in Fig.~\ref{fig:NN-rank-3-Pdet}.


For the NN system, the maximum degeneracy in the system is two-fold, and  each degenerate subspace consists of  at most one dark state. To illustrate the method given in Sec.~\ref{sec:br_vs_dark_energy}, i.e., to construct more than one dark states for subspace detection,  we incorporate the next nearest-neighbor hopping to the system along with NN in the next subsection. 


\subsection{NN and next nearest-neighbor (NNN) interacting system}


The system with NN and NNN hopping is governed by the Hamiltonian,
\begin{equation}
    H_1 = -\gamma \sum_{i=0}^{L-1} \ketbra{i+1}{i}  + \ketbra{i+2}{i} + h.c.,
    \label{eq:H-for-NN-and-NNN}
\end{equation}
where we take the system size to be $L=10n$ and $n\in \mathbb{Z}^+$. The eigenvalues and eigenvectors of $H_1$ are given by
\begin{equation}
    \begin{split}
        \epsilon_p &= -2\gamma \Big[ \cos(\frac{2\pi p}{L}) + \cos(\frac{4\pi p}{L}) \Big],  \\
        \ket{\epsilon_p} &= \frac{1}{\sqrt{L}} \sum_{j=0}^{L-1} \exp(\frac{i2\pi p j}{L}) \ket{j}.
    \end{split}
    \label{eq:NN-and-NNN-evals-and-evecs}
\end{equation}
We relabel the energy spectrum as follows:
\begin{itemize}
    \item \textit{Non-degenerate}: $\ket{E_0}=\ket{\epsilon_0}$ and $\ket{E_{L/2}}=\ket{\epsilon_{L/2}}$ are non-degenerate.

    \item \textit{Four-fold degenerate}: The set of  eigenstates corresponding to the single four-fold degenerate energy level can be represented as $\big\{\ket{E_{L/5,1}}=\ket{\epsilon_{L/5}},\ket{E_{L/5,2}}=\ket{\epsilon_{2L/5}},\ket{E_{L/5,3}}=\ket{\epsilon_{3L/5}},\ket{E_{L/5,1}}=\ket{\epsilon_{4L/5}}\big\}$.

    \item \textit{Two-fold degenerate}: $\frac{L-6}{2}$ number of two-fold degenerate energy levels can be written as $\big\{\ket{E_{k,1}}=\ket{\epsilon_k}\text{ and }\ket{E_{k,2}}=\ket{\epsilon_{L-k}}|k\neq L/5,2L/5\big\}_{k=1}^{(L/2)-1}$.
\end{itemize}
Let us analyze the total detection probability when a rank-$2$ detector is used to detect the particle. 
\subsubsection{Detecting the walker in the two-dimensional subspace ($\tilde r =2$)}
In the case of finding a particle in two-dimensional subspace, the projector is constructed as
\begin{equation}
    D = \ket{0}\bra{0} + \ket{d_2}\bra{d_2} \text{  with } d_2\ne 0,
    \label{eq:D-rank-2-NNN}
\end{equation}
Similar to the system with NN hopping, $\ket{E_0}$ and $\ket{E_{L/2}}$ are completely bright.

\textit{Four-fold degenerate subspace.} In the case of four-fold degenerate energy levels, we have
\begin{widetext}
    \begin{eqnarray}
    \mathcal{A}_{L/5} &=& 
    \begin{bmatrix}
        \braket{0}{E_{L/5,1}} & \braket{0}{E_{L/5,2}} & \braket{0}{E_{L/5,3}} & \braket{0}{E_{L/5,4}} \\
        \braket{d_2}{E_{L/5,1}} & \braket{d_2}{E_{L/5,2}} & \braket{d_2}{E_{L/5,3}} & \braket{d_2}{E_{L/5,4}} \\
      \end{bmatrix}, \nonumber \\
&=& \frac{\exp\left(\frac{i 2\pi d}{5}\right)}{L^2}                                 \begin{bmatrix}
            1 & 1 & 1 & 1 \\
            1 & \exp\left(\frac{i 2\pi d_2}{5}\right) & \exp\left(\frac{i 3\pi d_2}{5}\right) & \exp\left(\frac{i 4\pi d_2}{5}\right) \\ 
      \end{bmatrix}, \\
\implies REF(\mathcal{A}_{L/5})&=& 
    \begin{bmatrix}
        1 & 1 & 1 & 1 \\
        0 & \exp\left(\frac{i 2\pi d_2}{5}\right) -1 & \exp\left(\frac{i 3\pi d_2}{5}\right) - 1 & \exp\left(\frac{i 4\pi d_2}{5}\right) -1  \\ 
    \end{bmatrix}.
\end{eqnarray}
\end{widetext}
From $REF(\mathcal{A}_{L/5})$, it is clear that $\rank(\mathcal{A}_{L/5})=2$, when $\exp\left(\frac{i 2\pi d_2}{5}\right)=1, \exp\left(\frac{i 3\pi d_2}{5}\right)=1, $ and $\exp\left(\frac{i 4\pi d_2}{5}\right)=1$ are not simultaneously satisfied, i.e., $d_2\neq10m$ $\forall m\in\mathbb Z^+$. Following the procedure in Sec.~\ref{sec:br_vs_dark_energy}, two dark states in the case of the four-fold degenerate subspace can be calculated as
\begin{widetext}
    \begin{eqnarray}
    \nonumber \ket{\zeta_{L/5,1}} &=& N_1\Big[e^{\left(i\pi d\right)}\ket{E_{L/5,1}} - 2\cos\left(\frac{\pi d_2}{5}\right) e^{\left(\frac{i 4\pi d_2}{5}\right)} \ket{E_{L/5,2}} \nonumber + e^{\left(\frac{i 3\pi d_2}{5}\right)} \ket{E_{L/5,3}} \Big],\\ \nonumber 
    \ket{\zeta_{L/5,2}} &=& N_2 \Bigg[ 2i\cos^{2}\left(\frac{\pi d_2}{5}\right)\sin\left({\frac{\pi d_2}{5}}\right) \ket{E_{L/5,1}} \nonumber - \frac{i}{2}\Bigg(e^{\left(\frac{i8\pi d_2}{5}\right)}\sin\left(\frac{3\pi d_2}{5}\right) + e^{\left(\frac{i12\pi d_2}{5}\right)}\sin\left(\frac{\pi d_2}{5}\right) \Bigg) \ket{E_{L/5,2}} \nonumber \\ &+&  \frac{e^{\left(\frac{i6\pi}{5}\right)}}{2} \Big[2\left(e^{\left(\frac{i6\pi}{5}\right)} -1\right) \cos\left(\frac{\pi d_2}{5} \right) + i \sin\left(\frac{3\pi d_2}{5}\right)\Big] \ket{E_{L/5,3}} - i e^{\left(\frac{i 8\pi d_2}{5}\right)} \sin\left(\frac{3\pi d_2}{5} \right) \ket{E_{L/5,4}} \Bigg],
\end{eqnarray}
\end{widetext}
with the normalization constants, $N_1 = \frac{1}{\sqrt{2[1 +\cos^{2}\left(\frac{\pi d_2}{5}\right)]}}$ and $N_2 = \sqrt{\frac{2}{\left[20 + 27\cos\left(\frac{2\pi d_2}{5}\right) + 6\cos\left(\frac{4\pi d_2}{5}\right) + 7\cos\left(\frac{6\pi d_2}{5}\right) \right] \sin^{2}\left( \frac{\pi d_2}{5}\right)}}$. By Gram-Schmidt procedure, the bright states can be written as 
\begin{widetext}
\begin{eqnarray}
    \nonumber \ket{\eta_{L/5,1}}  &=& \frac{\mathbb{E}_{L/5}\ket{0}}{\sqrt{\bra{0}\mathbb{E}_{L/5}\ket{0}}}=\frac{1}{2}\left(\ket{E_{L/5,1}} + \ket{E_{L/5,2}} +  \ket{E_{L/5,3}} + \ket{E_{L/5,4}}\right),\\ \nonumber
    \ket{\eta_{L/5,2}}  &=&  \frac{\Big( \mathbb{E}_{L/5}\ket{d_2} - \bra{\eta_{L/5,1}}\mathbb{E}_{L/5}\ket{d_2} \ket{\eta_{L/5,1}} \Big)}{\sqrt{\bra{d_2}\mathbb{E}_{L/5}\ket{d_2} - \bra{d_2}\mathbb{E}_{L/5}\ket{\eta_{L/5,1}}\bra{\eta_{L/5,1}}\mathbb{E}_{L/5}\ket{d_2}}} \\ \nonumber &=&  B\Big[\left(e^{\frac{i4\pi d_2}{5}} + e^{\frac{i6\pi d_2}{5}} + e^{\frac{i8\pi d_2}{5}} \right) \ket{E_{L/5,1}}  + \left(e^{\frac{i2\pi d_2}{5}} + e^{\frac{i6\pi d_2}{5}} + e^{\frac{i8\pi d_2}{5}} \right) \ket{E_{L/5,2}} \nonumber \\ &+& \left(e^{\frac{i2\pi d_2}{5}} + e^{\frac{i4\pi d_2}{5}} + e^{\frac{i8\pi d}{5}} \right)  \ket{E_{L/5,3}} \\  &+& \left(e^{\frac{i2\pi d_2}{5}} + e^{\frac{i4\pi d_2}{5}} + e^{\frac{i6\pi d_2}{5}} \right)\ket{E_{L/5,4}} \Big], \\
    B &=& \frac{1}{\sqrt{4(1 - 1\left[\cos\left(2\pi d_2/5 \right) + \cos\left(4\pi d_2/5 \right)\right]^{2})}}.
\end{eqnarray}
\end{widetext}
On the other hand, for $d_2=10m$, $\rank({\mathcal{A}_k})= 1$, hence the three dark states and a single bright state are given by 
\begin{align}
    \nonumber \ket{\zeta_{L/5,1}} &=  \frac{1}{\sqrt{2}}\left[\ket{E_{L/5,2}} - \ket{E_{L/5,1}}\right],  \\ \nonumber
    \ket{\zeta_{L/5,2}} &=  \frac{1}{\sqrt{6}}\left[2\ket{E_{L/5,3}} - \ket{E_{L/5,2}} - \ket{E_{L/5,1}} \right],  \\ \nonumber
    \ket{\zeta_{L/5,3}} &= \frac{1}{2\sqrt{3}}\big[3\ket{E_{L/5,4}} - \ket{E_{L/5,3}} - \ket{E_{L/5,2}}\\ \&- \ket{E_{L/5,1}} \big], \label{eq:NNN-zeta-rank-2-2}
\end{align}
\begin{align}    
    \nonumber \ket{\eta_{L/5}} &= \frac{1}{2} \big[ \ket{E_{L/5,1}} + \ket{E_{L/5,2}} + \ket{E_{L/5,3}} \\ &+ \ket{E_{L/5,4}} \big].
    \label{eq:NNN-eta-rank-2-2}
\end{align}
\textit{Two-fold degenerate subspace. } In this scenario, the condition for the dark state gives
\begin{eqnarray}
    \mathcal{A}_k &=& 
    \begin{bmatrix}
        \braket{0}{E_{k,1}} & \braket{0}{E_{k,2}} \\
        \braket{d_2}{E_{k,1}} & \braket{d_2}{E_{k,2}}
      \end{bmatrix} \nonumber \\
  &=& \frac{1}{L}
  \begin{bmatrix}
        1 & 1 \\
        \exp\left(\frac{i 2\pi k d_2}{L}\right) & \exp\left(\frac{-i 2\pi k d_2}{L}\right) \\ 
    \end{bmatrix},  \\
    REF(\mathcal{A}_k) &=& 
    \begin{bmatrix}
        1 & 1 \\
        0 & 2 i \sin\left(\frac{ 2\pi k d_2}{L}\right) \\ 
    \end{bmatrix}.
\end{eqnarray}
When $d=\frac{m_k L}{2k}$ for some $m_k\in \mathbb Z^+$, we have $\rank(\mathcal{A}_k)=1$ and the corresponding dark and bright states are given by
\begin{eqnarray}
    \ket{\zeta_k} =  \frac{\ket{E_{k,1}} - \ket{E_{k,2}}}{\sqrt{2}}, \\
    \ket{\eta_k} =  \frac{\ket{E_{k,1}} + \ket{E_{k,2}}}{\sqrt{2}}.
\end{eqnarray}
Otherwise, the two-fold energy level has no dark states.

We take the system-size, $L=10$, $\ket{\phi(0)}=\ket s$ and using this analysis, we compute the total probability of detection for $d_2\neq 5$ as
\begin{eqnarray}
    P_{\det}(s) &=& \bra{s}\mathbb{P}_{\mathcal{H}_{\eta}}\ket{s} \nonumber \\
    &=& \frac{6}{10} + \frac{1}{10}\Big[\cos(\frac{2 s\pi}{5}) + \cos(\frac{4 s\pi}{5})\Big]^2 \nonumber \\
    &+& \frac{B^{2}}{10} \Big[\frac{3}{4}\cos(\frac{2(s-d_2)\pi}{5}) + \frac{3}{4}\cos(\frac{4(s-d_2)\pi}{5}) \nonumber \\ 
    &-& \frac{1}{4}\cos(\frac{2(s+d_2)\pi}{5}) - \frac{1}{4}\cos(\frac{4(s+d_2)\pi}{5}) \nonumber \\
    &-&\frac{1}{4}\cos(\frac{2(s+2d_2)\pi}{5}) - \frac{1}{4}\cos(\frac{2(s-2d_2)\pi}{5}) \nonumber \\  
    &-& \frac{1}{4}\cos(\frac{2(2s+d_2)\pi}{5}) - \frac{1}{4}\cos(\frac{2(2s-d_2)\pi}{5}) \Big]^2 \nonumber \\ 
    &=& \begin{cases}
        1 & s = 0, d_2, 5, d_2+5 \\
        2/3 & otherwise,
    \end{cases}
    \label{eq:NNN-Pdet-expr-rank2-1}
\end{eqnarray}
and for $d_2=5$ as
\begin{eqnarray}
    P_{\det}(s) &=& \frac{2}{10} + \frac{2}{10}\cos^{2}\left(\frac{2\pi s}{10}\right) + \frac{2}{10}\cos^{2}\left(\frac{6\pi s}{10}\right) \nonumber \\ 
    &\text{}& + \frac{1}{10} \left[ \cos^{2}\left(\frac{4\pi s}{10} \right) + \cos^{2}\left(\frac{8\pi s}{10} \right)  \right] \nonumber \\
    &=& \begin{cases}
            1 & s = 0, 5 \\
            3/8 & otherwise.
        \end{cases}
    \label{eq:NNN-Pdet-expr-rank2-2}
\end{eqnarray}
Finally, $P_{\det}$, calculated using Eq.~\eqref{eq:mat_pdet} as shown in Fig.~\ref{fig:NNN-rank-2-Pdet} agrees with the one obtained via 
\begin{figure}[htbp]
    \centering
    \includegraphics[width=\linewidth]{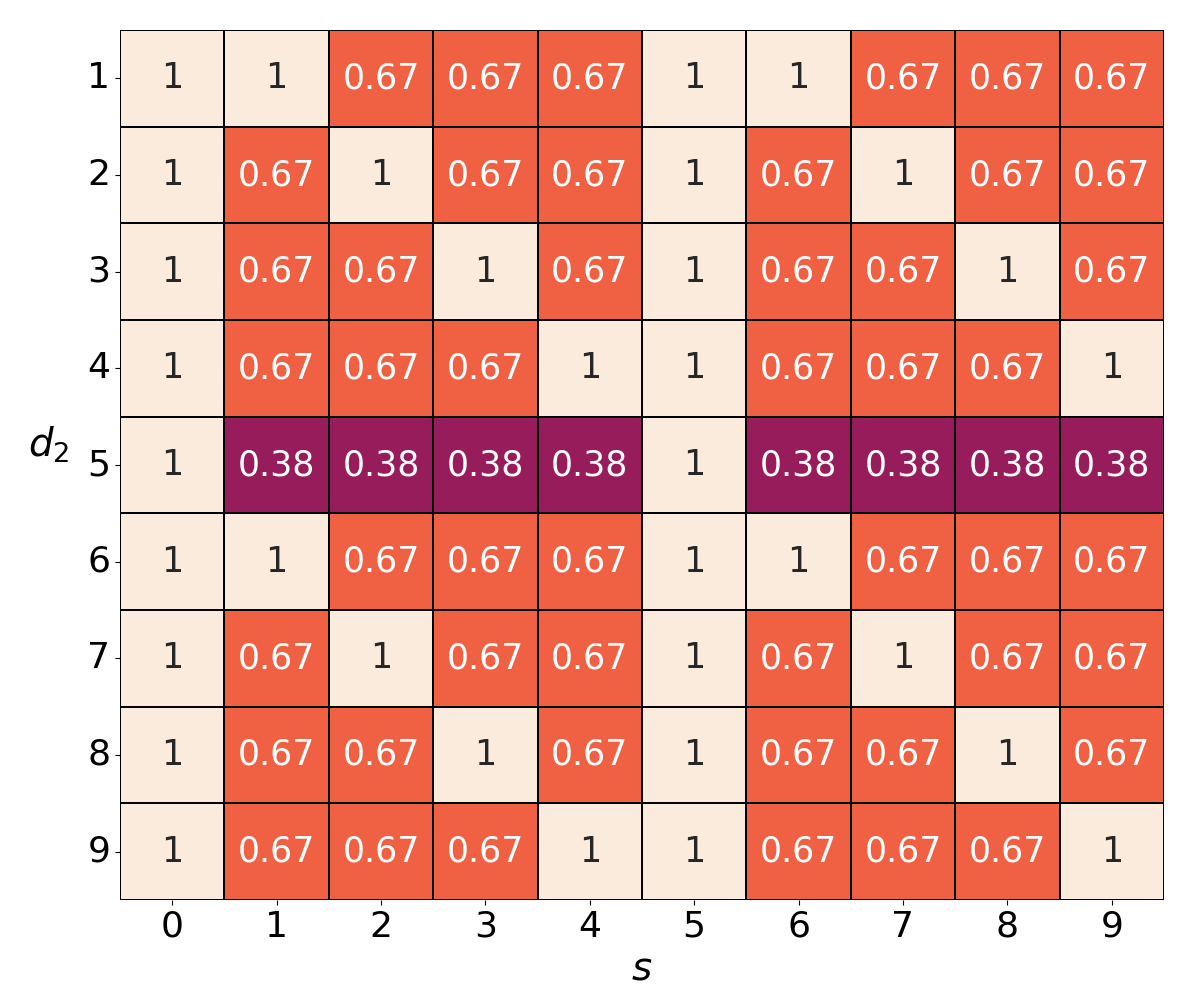}
    \caption{(Color online.) Two-dimensional subspace detection probability for a ring with additional next nearest-neighbor hopping ($L=10$) as given in Eq. (\ref{eq:H-for-NN-and-NNN}) and $\mathbb{D}=\ketbra{0} + \ketbra{d_2}$, $d_2=1,2,\cdots,9$. Notice that the values of $P_{det}$ are rounded up to two decimal places. Further, we notice that when $d_2=5$, $P_{\det}=0.375 \approx 0.38$ for all initial states except $\ket{0}$ and $\ket{5}$.}
    \label{fig:NNN-rank-2-Pdet}
\end{figure}
the approach with bright and dark states (see Eqs.~\eqref{eq:NNN-Pdet-expr-rank2-1} and \eqref{eq:NNN-Pdet-expr-rank2-2}).

\section{Average number of  measurements for subspace detection}
\label{sec:avnomeasure}

Throughout the paper, we have adhered to stroboscopic measurement protocol for detecting particle in a given subspace in quantum walk evolution. Along with the total detection probability $P_{\det}$ after an infinite number of measurement attempts, another quantity of interest is the time  required to detect the particle in the corresponding subspace, also known as the hitting time of the quantum walk \cite{Varbanov2008}. Intermediate free evolution time $\tau$ being constant, the average number of measurements, $\bar n$, calculated as $\bar{n}=\sum_{n=1}^{\infty}nF_n$, is required to detect the particle which means hitting time $\bar{\tau}=\tau \bar{n}$. Note, however from our previous discussions, $P_{\det}$ can be less than unity where average number of measurements required for detection can not be calculated by above mentioned formula as the detection is not guaranteed. Therefore, we redefine the average number of measurements conditioned on the fact that the particle is detected as 
\begin{eqnarray}
    \bar n = \frac{\sum_{n=1}^{\infty}nF_n}{P_{\det}}.
\end{eqnarray}
Furthermore, depending on the initial state of the quantum walk, the hitting time can again be classified into two distinct categories -
    \begin{enumerate}
        \item Average \textit{arrival time} for quantum walks starting from initial states orthogonal to the detection subspace.
        \item Average \textit{return time} for quantum walks starting with initial states within the detection subspace. 
    \end{enumerate}
Before presenting examples, let us first show that $\bar n$ can have a closed-form expression by which it can be calculated efficiently\ \cite{Varbanov2008,Kessler2021}.
\begin{proposition}
The average number of measurements required for the detection of the particle in a subspace conditioned on the fact that a particle is detected can be written in a closed-form as 
\begin{equation}
    \bar{n} = \frac{1}{P_{\det}}\sum_{i=1}^{r} \Tr\left[ \mathcal{L}^{-2} \bar{U} \bar{S} \bar{W} \bar{T_{i}^{*}} \right].
\end{equation}
\end{proposition}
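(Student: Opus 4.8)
The plan is to mirror the derivation of the closed form of $P_{\det}$ in Eq.~\eqref{eq:mat_pdet}, but with the $n$-th term weighted by $n$. Starting from $\bar n = \frac{1}{P_{\det}}\sum_{n=1}^{\infty} n F_n$ and inserting the matrix expression for $F_n$ from Eq.~\eqref{eq:fn}, linearity of the trace over the finite $L^2$-dimensional space gives
\[
    \sum_{n=1}^{\infty} n F_n = \sum_{i=1}^{\tilde r} \Tr\left[ \left( \sum_{n=1}^{\infty} n (\bar U \bar C)^{n-1} \right) \bar U \bar S \bar W \bar{T_i^*} \right].
\]
The heart of the argument is the operator identity $\sum_{n=1}^{\infty} n X^{n-1} = (\mathbb I - X)^{-2}$, the matrix analogue of $\sum_{n\ge 1} n x^{n-1} = (1-x)^{-2}$, valid whenever the spectral radius of $X$ is strictly below one. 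Applying it with $X = \bar U \bar C$ and recalling $\mathcal L = \mathbb I - \bar U \bar C$ converts the bracket into $\mathcal L^{-2}$; dividing by $P_{\det}$ then yields the claimed formula.

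What remains is to justify the two analytic steps, and the second is the genuine obstacle. Interchanging the infinite sum with the trace is harmless because the trace is a finite sum of scalar series; the real issue is that $\bar U \bar C$ need not be a strict contraction — as already noted below Eq.~\eqref{eq:mat_pdet}, $\mathcal L$ is singular when the spectrum of $U(\tau)$ is degenerate, and more generally $\bar U \bar C$ carries unit-modulus eigenvalues on (the doubled image of) the stationary dark subspace, where the survival operator $\mathbb S=(\mathbb I-\mathbb D)U(\tau)$ acts as a pure phase (cf.\ Eqs.~\eqref{eq:dark_subspace}--\eqref{eq:condition_dark}). I would resolve this exactly as for $P_{\det}$: decompose the space into the dark subspace $\mathcal H_\zeta$ and its bright complement $\mathcal H_\eta$. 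On $\mathcal H_\eta$ the relevant part of $\bar U \bar C$ has spectral radius strictly less than one, so the series $\sum_n n (\bar U \bar C)^{n-1}$ (and $\sum_n (\bar U\bar C)^{n-1}$) converge and equal the corresponding inverse powers; on $\mathcal H_\zeta$ the trailing factor $\bar S \bar W \bar{T_i^*}$, which carries the detector projectors $T_i$, annihilates the contribution, exactly as in the computation $\mathbb P_{\mathcal H_\zeta}\ket{\eta_k^j}=0$ of Proposition~\ref{prop:bright_pdet1}. Hence replacing $(\mathbb I - \bar U\bar C)^{-2}$ by the Moore--Penrose pseudoinverse $\mathcal L^{-2}$ gives the correct value and reduces to the ordinary inverse when $\mathcal L$ is non-singular.

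Finally, I would record the bookkeeping that makes $\bar n$ well defined: the same spectral-radius bound on $\mathcal H_\eta$ forces $F_n$ to decay geometrically, so $\sum_n n F_n$ converges, and running the identical manipulation with $\sum_n n x^{n-1}$ replaced by $\sum_n x^{n-1}=(1-x)^{-1}$ reproduces Eq.~\eqref{eq:mat_pdet} in the denominator. The only place where care is genuinely needed — and thus the step I expect to spell out in most detail — is the clean separation of the degenerate/dark part of the spectrum from the contracting part, since that is precisely what legitimizes passing to the pseudoinverse $\mathcal L^{-2}$.
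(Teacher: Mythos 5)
Your proposal follows essentially the same route as the paper: insert the trace expression for $F_n$ from Eq.~\eqref{eq:fn} into $\sum_n nF_n$ and apply the operator series identity $\sum_{n\ge 1} n X^{n-1}=(\mathbb I-X)^{-2}$ with $X=\bar U\bar C$ to obtain $\mathcal L^{-2}$. Your additional discussion of convergence on the bright subspace and the passage to the pseudoinverse when $\mathcal L$ is singular is more careful than the paper's own two-line computation, which silently performs the geometric-series summation, but the underlying argument is identical.
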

\begin{proof}
    Using Eq.~\eqref{eq:fn} the average number of measurements $\bar{n}$ conditioned on successful detection can be written as 
\begin{eqnarray}
        \nonumber \bar{n} &=& \frac{1}{P_{\det}}\sum_{i=1}^{\infty} n F_n \\ \nonumber
        &=& \frac{1}{P_{\det}}\sum_{n=1}^{\infty} \sum_{i=1}^{\tilde{r}} n\Tr\left[ (\bar{U}\bar{C})^{n-1} \bar{U} \bar{S} \bar{W} \bar{T_{i}^{*}} \right]\\
        &=& \frac{1}{P_{\det}}\sum_{i=1}^{r} \Tr\left[ \mathcal{L}^{-2} \bar{U} \bar{S} \bar{W} \bar{T_{i}^{*}} \right].
\end{eqnarray}
\end{proof}
Let us now calculate $\bar n$ for different configurations in case of NN interacting periodic system as mentioned in Eq.~\eqref{H for NN} with system-size $L=10$. 
\begin{figure}
    \centering
    \includegraphics[width=\linewidth]{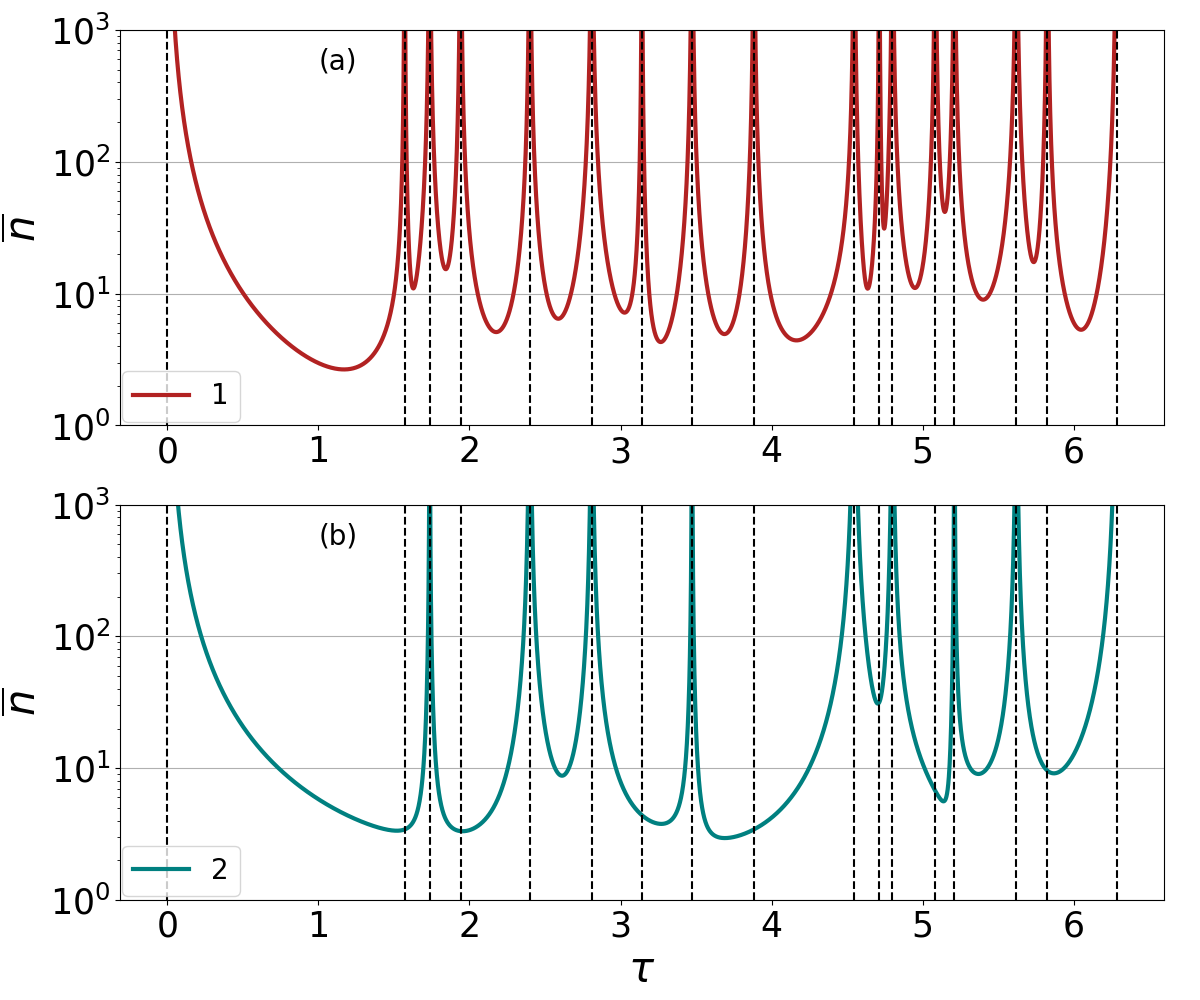}
    \caption{(Color online.)  $\bar n$ (ordinate) against free evolution time $\tau$ (abscissa) for system-size $L=10$ of a cyclic graph with NN hopping (Eq.~\eqref{H for NN}). Particle is initially at site $s=1$ (top) and $s=2$ (bottom) and the detector is $\mathbb{D}=\ketbra{0}{0}$. Comparing the plots, we demonstrate that $\bar n$ may be divergent only at $\tau_c$. All the axes are dimensionless.}
    \label{fig:rank1-10-nn}
\end{figure}
\subsection{Arrival time}
At first, for single site detection we take the detector to be fixed at $d=0$ and calculate $\bar n$ where the walk is taken to be initially at site $s=1$ or $s=2$ as shown in Fig.~\ref{fig:rank1-10-nn}. Notice that $\bar n$ diverges at certain times for both cases which is possible only when the critical condition \cite{Liu2020,Kessler2021} $|\epsilon_{p'} -\epsilon_p|\tau_c = 2 n \pi$ (with $n\in \mathbb{Z}^+$) satisfies for any pair of energy eigenvalues $\epsilon_{p'}$ and $\epsilon_p$ of $H$. When $\tau=\tau_c$s, the degeneracy of the evolution unitary $U=e^{-iHt}$ increases from the degeneracy of $H$. In Fig.~\ref{fig:rank1-10-nn}, the vertical dotted lines correspond to all the critical times $\tau_c$. However, divergence at $\tau=0$ signifies the Zeno limit \cite{thiel2020zeno}. Note that for $s=1$, at all $\tau_c$s, the value of $\bar n$ diverges. However, $\bar n$ does not diverge at all $\tau_c$s when $s=2$. This shows that although all the $\tau_c$s may not correspond to divergence seen in $\bar n$, it belongs to the set of $\tau_c$s. Note further that the methods discussed in Secs. \ref{sec:br_vs_dark_energy}  and \ref{sec:matrix_method} for computing \(P_{det}\) remains valid in all the points in which \(\bar{n}\) does not diverge. 

Let us demonstrate that in the case of higher rank measurements, $\bar n$ diverges for some values of $\tau$s, which is a subset of the set of critical time $\tau_c$ obtained in rank-1 measurements. 
\begin{figure}
        \centering
        \includegraphics[width=\linewidth]{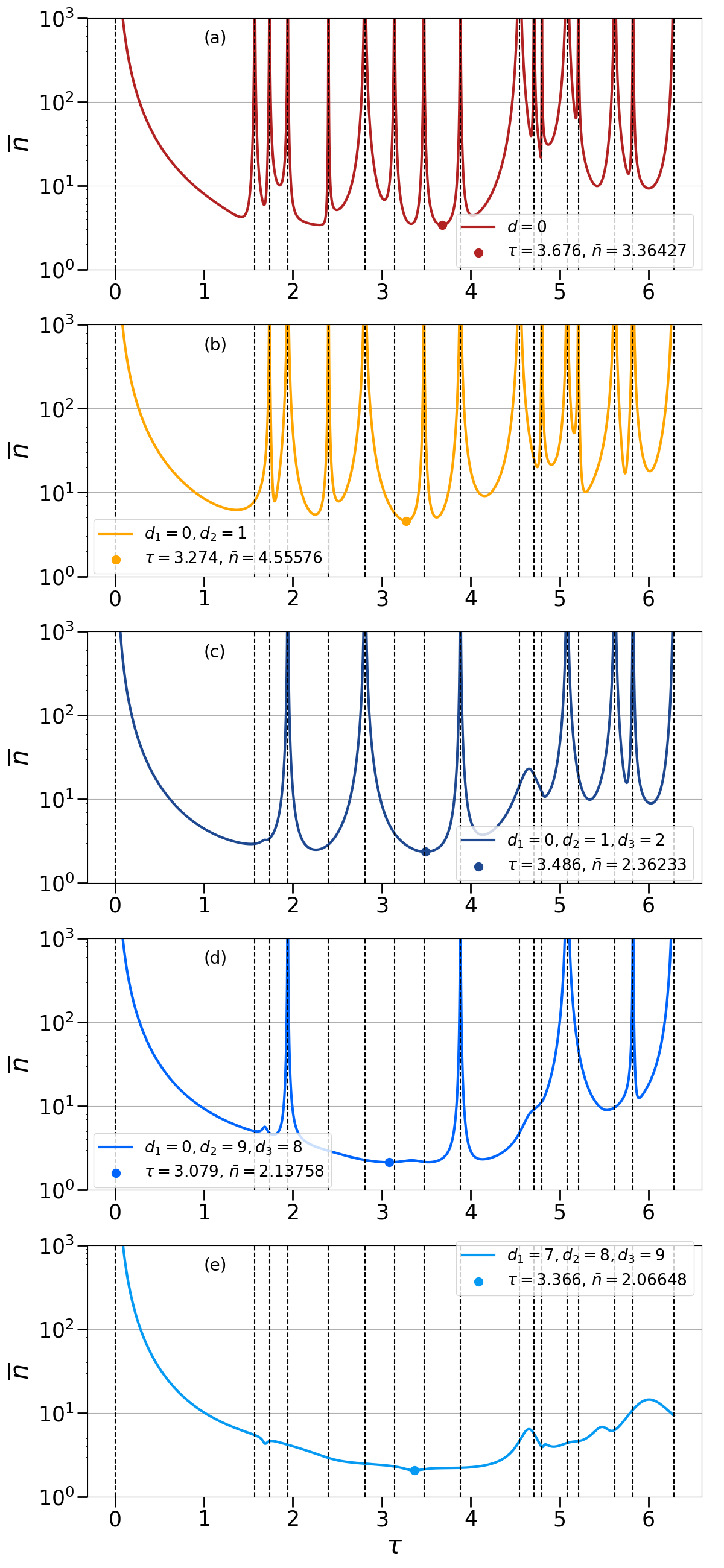}
        \caption{(Color online.) The average number of measurements conditioned on  the particle to be  detected, $\bar{n}$ (vertical axis) against $\tau$ (horizontal axis) for a quantum walk on the cyclic graph with $L=10$ in  Eq.~\eqref{H for NN} initialized at $\ket s =\ket{3}$. The subspaces of dimensions one ((a)), two ((b)), and three ((c)-(e)) are measured. Points in each plot represent  the minimum $\bar{n}$ and its corresponding \(\tau\), mentioned in the legend.  As mentioned in the text, the number of divergences in $\bar n$ can be reduced by suitably choosing rank-3 detector as seen from (c), (d) and (e). Interestingly, for diametrically opposite subspace detector, $\mathbb{D}=\ketbra{7}{7}+\ketbra{8}{8}+\ketbra{9}{9}$ ((e)), there is no divergence except at the Zeno limit. All the axis are dimensionless.} 
        \label{fig:nbar-10-s-3}
    \end{figure}
For illustration, we consider the initial state as $\ket{s}=\ket{3}$ and increase the rank of the detector subspace. The observations are listed below.
\begin{enumerate}
    \item Let us take $\mathbb{D} = \ketbra{0} + \ketbra{1}$ and $\mathbb{D}=\ketbra{0} + \ketbra{1} + \ketbra{2}$. In case of rank-1 detector, $\mathbb{D}=\ketbra{0}$, the number of divergences in $\bar{n}$ is $17$ (see Fig. \ref{fig:rank1-10-nn}) and they exist at all $\tau_c$ with $0\le \tau \le 2\pi$. On the other hand, the divergences in $\bar{n}$ reduce to $14$ and $8$ for the above rank-2 and rank-3 detectors respectively (see Fig. \ref{fig:nbar-10-s-3}).
    
    \item In case of rank-3 detector, instead of $d_2 = 1$ and $d_3 = 2$, if we take $d_2 = 9$ and $d_3 = 8$, the number of divergences in $\bar{n}$ is further reduced to $6$.
    \item Interestingly, when the detector subspace is exactly diametrically opposite to the initial state $\ket{3}$, i.e., $\mathbb{D}=\ketbra{7}+\ketbra{8}+\ketbra{9}$,  all divergences vanish except the Zeno limit. Note that the site $d_2=8$ is the diametrically opposite one for $\ket{s}=\ket{3}$. 
\end{enumerate}

Moreover, we compute the time ($\tau$) required for minimum $\bar n$, listed in Fig.~\ref{fig:nbar-10-s-3}. The foregoing analysis, in conjunction with the mitigation of divergences in the necessary number of measurements for detection, can be important from the perspective of the control of a quantum particle through a periodic measurement scheme, warranting further investigation.

\subsection{Return time}

In the case of a return problem, we consider the initial state to be localized in any single site of a given subspace. For a fixed initial state, $\ket s=\ket 0$, the return of the particle in $\mathbb{D}=\ketbra{0}{0}$ and $\mathbb{D}=\ketbra{0}{0}+\ketbra{1}{1}$ has constant $\bar n=6$ and $\bar n= 5$ respectively. Notably, for the subspace of rank greater than two, the average number of measurements required for the detection oscillates with $\tau$ as shown in Fig.~\ref{fig:return-nbar}, which requires critical analysis.

\begin{figure}
    \centering
    \includegraphics[width=\linewidth]{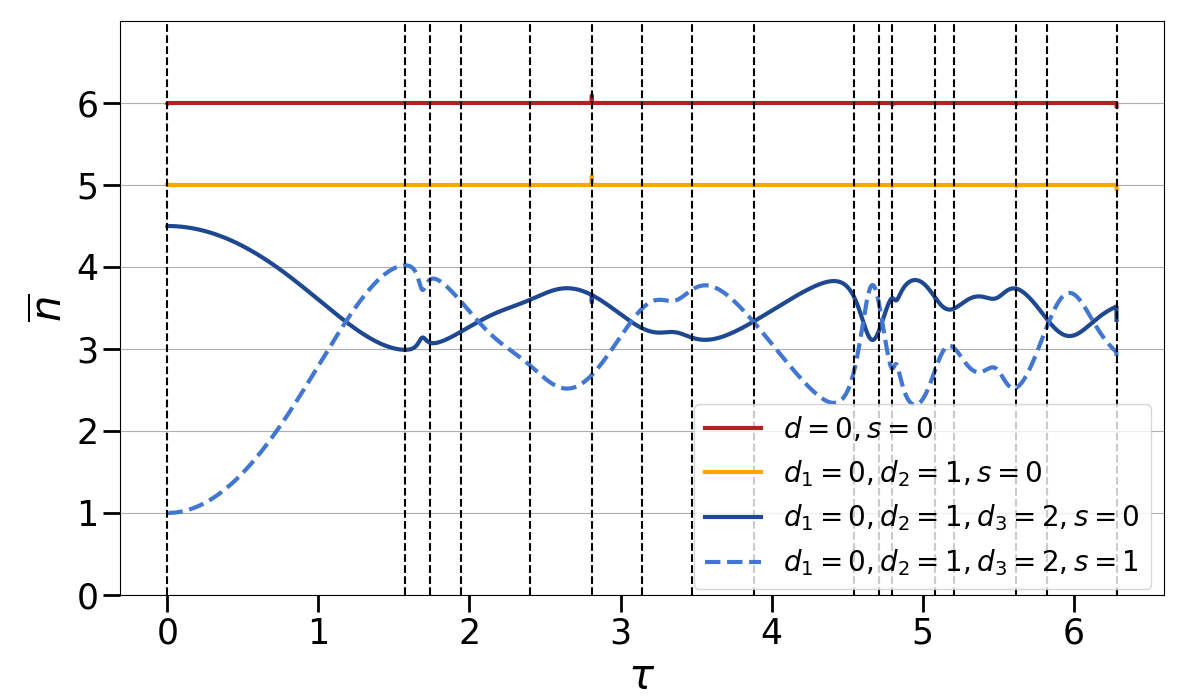}
    \caption{(Color online.) $\bar{n}$ (vertical axis) with $\tau$ (horizontal axis) for return problem in subspaces of dimension one, two, and three. The initial states are chosen such that $\bar{\tau}=\tau \bar{n}$ corresponds to the average return time of a quantum walk for a particular subspace. For initial state, $\ket s=\ket 0$, the value of $\bar n=6$ and $5$ for $\mathbb{D}=\ketbra{0}{0}$ and $\mathbb{D}=\ketbra{0}{0}+\ketbra{1}{1}$ respectively are independent $\tau$. On the other hand,  $\bar n$ attains oscillatory behavior with $\tau$ for $\mathbb{D}=\ketbra{0}{0}+\ketbra{1}{1}+\ketbra{2}{2}$ with initial states at the  site $s=0$ (solid line) or $1$ (dashed line). Note that $\bar n$ decreases with the increase in the rank of the detector. Both the axis are dimensionless. }
    \label{fig:return-nbar}
\end{figure}

\section{conclusion}
\label{sec:conclu}

    We  focused on the statistics of the particle's arrival in a given subspace undergoing measurement-induced quantum walk. In particular,  we obtained the first detection probability and the corresponding  total detection probability of a particle within  a subspace using the stroboscopic measurement protocol.
    
    We formulated  an alternative framework for detecting a particle in a  subspace  that utilizes the notion of dark and bright energy 
    eigenstates of a given Hamiltonian used for a quantum walk.
    Specifically, we employed the rank-nullity theorem to determine the number of dark and bright energy eigenstates for degenerate energy levels when we assert the detection of a particle in a defined subspace. 
    Based on the energy spectrum of the Hamiltonian, responsible for a quantum walk and its relationship with the detector state, we uncovered conditions independent of the choice of the initial state so that the particle performs a quantum walk to be certainly detected in a given subspace.  
    Note that such subspaces are perfect candidates for encoding target states in the context of quantum search problems. 
     In this configuration,  the possibility of the existence of multiple bright states corresponding to a degenerate energy level emerges which is not observed in the detection of the single site \cite{Barkai2020DarkState}.
    For a given rank of the subspace, we illustrated the monotonic decrease of detection probability with increase of dark states in the system depending on the position of the detectors. Furthermore,  by using alternative numerical method, we  computed the total detection probability  of continuous-time random walk in a cyclic graph with nearest-neighbor and next nearest-neighbor hopping which can also be confirmed using the approach with  bright and dark energy states. 
    
    In the context of arrival and return to the subspace, we found that the divergence observed in the average number of measurements conditioned on the successful detection  can be significantly reduced  when subspace detection of a particle is taken into account with the assistance of higher rank projectors. 
    The measuring process presented in this work is not identical with other divergence-removing techniques known in the literature and hence opens up an intriguing avenue for future exploration in case of quantum control. Our findings emphasize the significance of measurement strategies for conclusively detecting the particle and may suggest more study utilizing higher rank projectors in the context of quantum continuous-time random walks.

\acknowledgments
 
We acknowledge the support from the Interdisciplinary Cyber-Physical Systems (ICPS) program of the Department of Science and Technology (DST), India, Grant No.: DST/ICPS/QuST/Theme- 1/2019/23. We acknowledge the use of \href{https://github.com/titaschanda/QIClib}{QIClib} -- a modern C++ library for general-purpose quantum information processing and quantum computing (\url{https://titaschanda.github.io/QIClib}) and cluster computing facility at Harish-Chandra Research Institute.

\appendix
\section{Deriving exact formulas of first detection statistics in case of subspace detection}
\label{Appndx:Subspace}
Let us derive a closed-form expression for the first detection statistics in the case of subspace detection. For any operator $X$, $\bra{\Psi}X\ket{\Phi}$ in energy eigenbasis of the given Hamiltonian can be written as 
\begin{eqnarray}
        \nonumber \bra{\Psi}X\ket{\Phi} &=& \sum_{i,j}\langle\Psi\ketbra{\epsilon_i}{\epsilon_i}X\ketbra{\epsilon_j}{\epsilon_j}\Phi\rangle\\ \nonumber &=& \sum_{i,j} \Psi^{*}_{i}X_{ij}\Phi_{j} \\
        &=& \Tr\left[ X S W T^{*} \right]
        \label{eq:appndx-trace-picture}
\end{eqnarray}
where $\Psi_{i} = \braket{\epsilon_i}{\Psi}$, $\Phi_{i} = \braket{\epsilon_i}{\Phi}$.
The operators are as follows
\begin{eqnarray}
    \nonumber W_{ij} &=& 1, \\\nonumber
    S &=& \text{diag}(\Phi_1, \Phi_2, ..., \Phi_n),\\
    T &=& \text{diag}(\Psi_1, \Psi_2, ..., \Psi_n).
\end{eqnarray}
The unitary operator in the energy basis can be written as
 \begin{equation}
     U =\text{diag}(e^{-i \epsilon_1 \tau}, e^{-i \epsilon_2 \tau}, e^{-i \epsilon_3 \tau}, ..., e^{-i \epsilon_n \tau} ).
 \end{equation}   
Using Eq.~(\ref{Fn for subspace}), the first detection probability for $n=1$ is given by 
\begin{eqnarray}
    \nonumber F_1 &=& \bra{\phi(1)}\mathbb{D}\ket{\phi(1)} \\\nonumber
    &=& \sum_{i=1}^{r} \bra{\phi(0)} U^{\dag}\ketbra{d_i}U\ket{\phi(0)} \\\nonumber
    &=& \sum_{i=1}^{r} \Tr\left[U^{*} T_{i} W S^{*} \right] \Tr\left[U S W T_{i}^{*} \right] \\
    &=& \sum_{i=1}^{r} \Tr\left[U S W T_{i}^{*}\right]^{*} \Tr\left[U S W T_{i}^{*}\right],
\end{eqnarray}
where $S_{kk} = \braket{\epsilon_k}{\psi(0)}$ , $(T_{i})_{kk}=\braket{\epsilon_k}{d_i}$. Going from third to fourth line we use the cyclic property of trace along with the fact that $U, S, T_i$ commute as they are diagonal matrices.
\begin{widetext}
    Similarly, for $n=2$, we have
    \begin{eqnarray}
        F_2 &=& \bra{\phi(2)}\mathbb D\ket{\phi(2)} \nonumber \\
            &=& \sum_{i=1}^{\tilde r} \bra{\phi(0)} U^{\dag}(\mathbb{I}-\mathbb D)U^{\dag}\ket{d_i}\bra{d_i}U(\mathbb{I}-\mathbb D)U\ket{\phi(0)} \nonumber \\
            &=& \sum_{i=1}^{\tilde r} \Tr\left[ U\left(\mathbb{I} - \sum_{j=1}^{\tilde r}T_{j}^{*}WT_{j}\right) U S W T_{i}^{*} \right]^{*} \Tr\left[ U\left(\mathbb{I} - \sum_{j=1}^{\tilde r}T_{j}^{*}WT_{j}\right) U S W T_{i}^{*} \right] \nonumber \\
            &=& \sum_{i=1}^{\tilde r} \Tr\left[ U C U S W T_{i}^{*} \right]^{*} \Tr\left[ U C U S W T_{i}^{*} \right],
    \end{eqnarray}
\end{widetext}
where $C=\mathbb{I} - \sum_{j=1}^{\tilde r}T_{j}^{*}WT_{j}$. Finally, for the $n$th round, $F_n$ is calculated as 
\begin{equation}
    \begin{split}
        F_n &= \bra{\phi(n)}\mathbb D\ket{\phi(n)} \\ 
        &= \sum_{i=1}^{\tilde r}\Tr\left[ (U C)^{n-1} U S W T_{i}^{*} \right]^{*} \Tr\left[ (U C)^{n-1} U S W T_{i}^{*} \right]. \\
    \end{split}
    \label{eq:derived-Fn-appndx}
\end{equation}

Using the property of trace, given as 
\begin{equation}
    \begin{split}
        \Tr[A \otimes B] &= \Tr[A] \Tr[B]  \\
        \Tr[A^{*} \otimes A] &= \Tr[A^{*}] \Tr[A] = \Tr[A]^{*}\Tr[A],
    \end{split}
\end{equation}
and defining $\bar{O} = O^{*} \otimes O $, for any operator $O$ we can rewrite Eq.~(\ref{eq:derived-Fn-appndx}) as
\begin{equation}
    F_n = \sum_{i=1}^{\tilde r}\Tr\left[ (\bar{U} \bar{C})^{n-1} \bar{U} \bar{S} \bar{W} \bar{T_{i}^{*}} \right].
\end{equation}

Therefore, we can calculate $P_{\det}$ as  
\begin{equation}
    \begin{split}
        P_{\det} &= \sum_{n=1}^{\infty} F_n \\
        &= \sum_{n=1}^{\infty} \sum_{i=1}^{\tilde r} \Tr\left[ (\bar{U}\bar{C})^{n-1} \bar{U} \bar{S} \bar{W} \bar{T_{i}^{*}} \right] \\ 
        &= \sum_{i=1}^{\tilde r} \Tr\left[ \mathcal{L}^{-1} \bar{U} \bar{S} \bar{W} \bar{T_{i}^{*}} \right],
    \end{split}
\end{equation}
where we perform summation of the infinite geometric series and define $\mathcal{L} = \mathbb{I} - \bar{U}\bar{C}$.

\section{$P_{\det}$ for NN hopping system with rank-$3$ detector}
\label{sec:app-nn_r3}
The detector has the following configuration:
\begin{eqnarray}
    \mathbb D=\ketbra{0}{0}+ \ketbra{d_2} + \ketbra{d_3},
    \label{eq:det_r3}
\end{eqnarray}
where $d_2=1, 2, \hdots L$, and $d_3 =1, 2, \hdots L/2-1$ with $d_2 \ne d_3$. For $L=10$, we calculate $P_{\det}$ as shown in Fig.~\ref{fig:NN-rank-3-Pdet}.
\begin{figure}[h]
    \centering
    \includegraphics[width=0.9\linewidth]{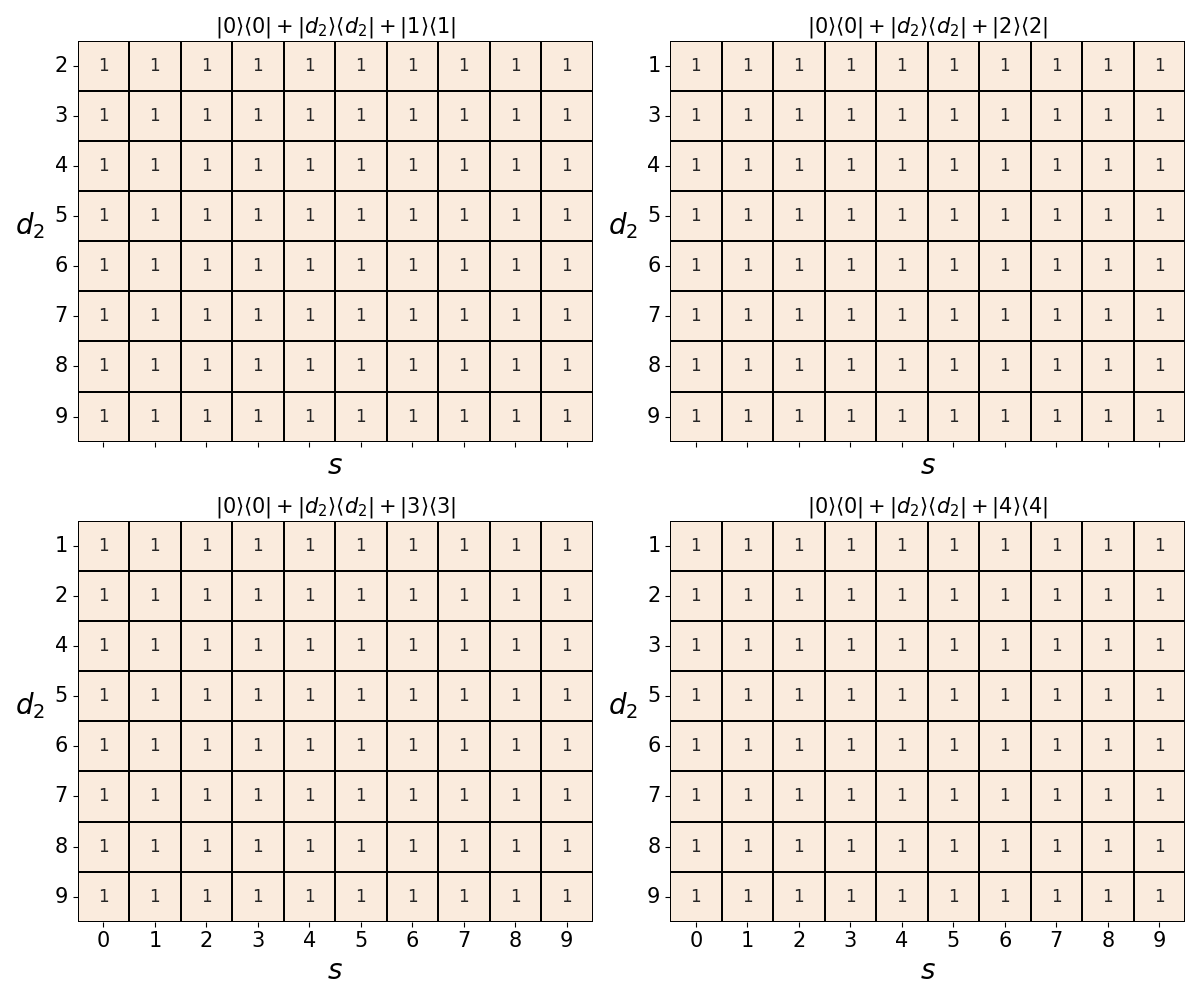}
    \caption{(Color online.) Three-dimensional subspace detection probability for the system-size $(L=10)$ as in Eq.~(\ref{H for NN}) and the detector given as Eq.~(\ref{eq:det_r3}). For all rank-3 detectors, the particle is certainly detected due to the absence of dark energy states as stated in Theorem \ref{th1}. }
    \label{fig:NN-rank-3-Pdet}
\end{figure}

\newpage

\bibliographystyle{apsrev4-1}
\bibliography{bib.bib}

\end{document}